


\documentclass[11pt]{article}

\usepackage{amsmath}
\usepackage{amsthm}
\usepackage{amssymb}
\usepackage{amsfonts}
\usepackage{graphicx}
\usepackage{epsfig}
\usepackage{epsf}
\usepackage{datetime}
\usepackage{latexsym}
\usepackage{enumerate}
\usepackage[small]{caption}
\usepackage{amsfonts}
\usepackage{fancyvrb}
\usepackage{fancyhdr}
\usepackage{amssymb}
\usepackage{amsthm}
\usepackage{graphics}
\usepackage{psfrag}
\usepackage{bbm}
\usepackage{url}
\usepackage{hyperref}

\setlength{\oddsidemargin}{0in}
\setlength{\evensidemargin}{0in}
\setlength{\topmargin}{0in}
\setlength{\headheight}{0in}
\setlength{\headsep}{0in}
\setlength{\textwidth}{6.5in}
\setlength{\textheight}{9in}

\newtheorem{theorem}{Theorem}
\newtheorem{lemma}{Lemma}

\newtheorem{corollary}{Corollary}

\newtheorem{observation}{Observation}

\newtheorem{problem}{Problem}

\def\etal{{et~al.}}

\def\ie{{i.e.}}

\newcommand{\alg}{\textsf{ALG}}
\newcommand{\opt}{\textsf{OPT}}

\newcommand{\NN}{\mathbb{N}} 
\newcommand{\ZZ}{\mathbb{Z}} 
\newcommand{\RR}{\mathbb{R}} 
\newcommand{\eps}{\varepsilon}
\providecommand{\intd}[0]%
{\;\mbox{d}}

\newcommand{\old}[1]{{}}
\newcommand{\later}[1]{{}}

\renewcommand{\vec}[1]{\overrightarrow{#1}}

\title{\textsc{Online Unit Covering in Euclidean Space}\footnote{
Preliminary results were presented at the 27th Annual Fall Workshop on Computational
Geometry, Nov.~2017, University of New York at Stony Brook, USA.}
}

\author{Adrian Dumitrescu\thanks{Department of Computer Science,
University of Wisconsin--Milwaukee, WI, USA\@.
Email:~\texttt{dumitres@uwm.edu}.}
 \qquad
 Anirban Ghosh\thanks{School of Computing,
   University of North Florida,
   Jacksonville, FL, USA\@.
   Email:~\texttt{anirban.ghosh@unf.edu}.
Research supported by the University of North Florida start-up fund.}
  \qquad
 Csaba D. T\'oth\thanks{Department of Mathematics,
    California State University Northridge, Los Angeles, CA;
 and Department of Computer Science, Tufts University, Medford, MA, USA\@.
 Email:~\texttt{cdtoth@acm.org}.}
}

\begin{document}
\maketitle

\begin{abstract}
We revisit the online \textsc{Unit Covering} problem in higher dimensions:
Given a set of $n$ points in $\RR^d$, that arrive one by one,
cover the points by balls of unit radius,
so as to minimize the number of balls used.
In this paper, we work in $\RR^d$ using Euclidean distance.
The current best competitive ratio of an online algorithm, $O(2^d d \log{d})$,
is due to Charikar~\etal~(2004); their algorithm is deterministic.

(I) We give an online deterministic algorithm with competitive ratio $O(1.321^d)$,
thereby sharply improving on the earlier record by a large exponential factor.
In particular, the competitive ratios are $5$ for the plane and $12$ for $3$-space
(the previous ratios were $7$ and $21$, respectively).
For $d=3$, the ratio of our online algorithm matches the ratio
of the current best offline algorithm for the same problem due to Biniaz~\etal~(2017),
which is remarkable (and rather unusual).

(II) We show that the competitive ratio of every deterministic online algorithm
(with an adaptive deterministic adversary) for \textsc{Unit Covering}
in $\RR^d$ under the $L_{2}$ norm is at least $d+1$ for every $d \geq 1$.
This greatly improves upon the previous best lower bound, $\Omega(\log{d} / \log{\log{\log{d}}})$,
due to Charikar~\etal~(2004).

(III) We obtain lower bounds of $4$ and $5$ for the competitive ratio of any deterministic algorithm
for online \textsc{Unit Covering} in $\RR^2$ and respectively $\RR^3$;
the previous best lower bounds were both $3$.

(IV) When the input points are taken from the square or hexagonal lattices in $\RR^2$,
we give deterministic online algorithms for \textsc{Unit Covering} with an optimal
competitive ratio of $3$.

\medskip\noindent
\textbf{\small Keywords}: online algorithm, unit covering, unit clustering,
competitive ratio, lower bound, Newton number.

\end{abstract}

\section{Introduction} \label{sec:intro}

Covering and clustering are fundamental problems in the theory of algorithms,
computational geometry, optimization, and other areas.
They arise in a wide range of applications, such as facility
location, information retrieval, robotics, and wireless networks.
While these problems have been studied in an offline setting for
decades, they have been considered only recently in a more
dynamic (and thereby practical) setting. Here we study such problems in a
high-dimensional Euclidean space and mostly in the $L_2$ norm.
We first formulate them in the classic \emph{offline} setting.

\begin{problem} \label{prob:1}
\textsc{$k$-Center}. Given a set of $n$ points in $\RR^d$ and a positive integer $k$,
cover the set by $k$ congruent balls centered at the points so that the diameter
of the balls is minimized.
\end{problem}

The following two problems are dual to Problem~\ref{prob:1}.

\begin{problem}\label{prob:2}
\textsc{Unit Covering}. Given a set of $n$ points in $\RR^d$,
cover the set by balls of unit diameter so that the number of balls is minimized.
\end{problem}

\begin{problem}\label{prob:3}
\textsc{Unit Clustering}.
Given a set of $n$ points in $\RR^d$,
partition the set into clusters of diameter at most one so that the number of clusters is minimized.
\end{problem}

Problems~\ref{prob:1} and~\ref{prob:2} are easily solved in polynomial time for points on the line,
\ie, for $d=1$; but both problems become NP-hard already in
Euclidean plane~\cite{FPT81,MS84}. Factor $2$ approximations are known for
\textsc{$k$-Center} in any metric space (and so for any dimension)~\cite{FG88,Go85};
see also~\cite[Ch.~5]{Va01},~\cite[Ch.~2]{WS11},
while polynomial-time approximation schemes are known for \textsc{Unit Covering}
for any fixed dimension~\cite{HM85}. However, these algorithms are notoriously inefficient
and thereby impractical; see also~\cite{BLMS17} for a summary of such results and different
time vs. ratio trade-offs.

Problems~\ref{prob:2} and~\ref{prob:3} are identical in the offline setting:
indeed, one can go from clusters to balls in a straightforward way; and conversely,
one can assign multiply covered points in an arbitrary fashion to unique balls.
In regard to their \emph{online} versions, it is worth emphasizing two common properties:
(i)~a point assigned to a cluster must remain in that cluster; and
(ii)~two distinct clusters cannot merge into one cluster, \ie, the clusters maintain
their identities. In this paper we focus on the second problem, namely \emph{online}
\textsc{Unit Covering}; we however point out key differences between this problem and
online \textsc{Unit Clustering}.

The performance of an online algorithm $\alg$ is measured by comparing it to an
optimal offline algorithm $\opt$ using the standard notion of competitive ratio~\cite[Ch.~1]{BY98}.
The competitive ratio of $\alg$ is defined as
$\sup_\sigma \frac{\alg(\sigma)}{\opt(\sigma)}$,
where $\sigma$ is an input sequence of points,
$\opt(\sigma)$ is the cost of an optimal offline algorithm for $\sigma$
and $\alg(\sigma)$ denotes the cost of the solution produced by $\alg$ for this input.
For randomized algorithms, $\alg(\sigma)$ is  replaced by the expectation $E[\alg(\sigma)]$,
and the competitive ratio of $\alg$ is $\sup_\sigma \frac{E[\alg(\sigma)]}{\opt(\sigma)}$.
Whenever there is no danger of confusion, we use $\alg$ to refer to an algorithm
or the cost of its solution, as needed.

Charikar~\etal~\cite[Sec.~6]{CCFM04} studied the online version of \textsc{Unit Covering}
(under the name of ``Dual Clustering'').
The points arrive one by one and each point needs to be assigned to a new or to an
existing unit ball upon arrival; the $L_2$ norm is used in $\RR^d$, $d\in \NN$.
The location of each new ball is fixed as soon as it is opened.
The authors provided a deterministic algorithm of competitive ratio $O(2^d d \log{d})$
and gave a lower bound of $\Omega(\log{d} / \log{\log{\log{d}}})$
on the competitive ratio of any deterministic algorithm for this problem.
For $d=1$ a tight bound of $2$ is folklore; for $d=2$ the best known
upper and lower bounds on the competitive ratio are $7$ and $3$, respectively,
as implied by the results in~\cite{CCFM04}\footnote{Charikar~\etal~\cite{CCFM04}
  claim (on p.~1435) that a lower bound of $4$ for $d=2$ under the $L_2$ norm
  follows from their Theorem~6.2;  but this claim appears unjustified;
  only a lower bound of $3$ is implied. Unfortunately, this misinformation has been
  carried over also by~\cite{CZ09} and~\cite{DT17}.}.

The online  \textsc{Unit Clustering} problem was introduced by
Chan and Zarrabi-Zadeh~\cite{CZ09} in 2006.
While the input and the objective of this problem are identical to those
for \textsc{Unit Covering}, this latter problem is more flexible in that
the algorithm is not required to produce unit balls at any time, but rather
the smallest enclosing ball of each cluster should have diameter \emph{at most} $1$;
furthermore, a ball may change (grow or shift) in time.
The authors showed that several standard approaches for \textsc{Unit Clustering},
namely the deterministic algorithms
\texttt{Centered},
\texttt{Grid}, and
\texttt{Greedy},
all have competitive ratio at most $2$ for points on the line ($d=1$).
Moreover, the first two algorithms are applicable for \textsc{Unit Covering},
with a competitive ratio at most $2$ for $d=1$, as well.
These algorithms naturally extend to any higher dimension (including \texttt{Grid}
provided the $L_\infty$ norm is used).

\begin{quote}
\texttt{Algorithm Centered.}
For each new point $p$, if $p$ is covered by an existing unit ball, do nothing;
otherwise open a new unit ball centered at $p$.
\end{quote}

\begin{quote}
\texttt{Algorithm Grid.} Build a uniform grid in $\RR^d$ where cells are
unit cubes of the form \linebreak
$\prod \, [i_j,i_j+1)$, where $i_j \in \ZZ$ for $j=1,\ldots,d$.
For each new point $p$, if the grid cell containing $p$ is nonempty,
put $p$ in the corresponding cluster; otherwise open a new cluster for the grid cell
and put $p$ in it.
\end{quote}

Since in $\RR^d$ each cluster of $\opt$ can be split into at most $2^d$ grid-cell clusters
created by the algorithm, its competitive ratio is at most $2^d$, and this analysis is tight
for the $L_\infty$ norm. It is worth noting that there is no direct analogue
of this algorithm under the $L_2$ norm.

\medskip
Some (easy) remarks are in order. Any lower bound on the competitive ratio of an online algorithm
for \textsc{Unit Clustering} applies to the  competitive ratio of the same type of algorithm
for \textsc{Unit Covering}. Conversely,
any upper bound on the competitive ratio of an online algorithm for \textsc{Unit Covering}
yields an upper bound on the competitive ratio of the same type of algorithm
for \textsc{Unit Clustering}.

\paragraph{Related work.}
\textsc{Unit Covering} is a variant of \textsc{Set Cover}. Alon~\etal~\cite{AAA+09}
gave a deterministic online algorithm of competitive ratio $O(\log{m} \log{n})$
for this problem, where $n$ is the size of the ground set and $m$ is the number of sets
in the family. Buchbinder and Naor~\cite{BN09b} obtained sharper results under the assumption
that every element appears in at most $\Delta$ sets.

Chan and Zarrabi-Zadeh~\cite{CZ09} showed that no online algorithm
(deterministic or randomized) for \textsc{Unit Covering}
can have a competitive ratio better than $2$ in one dimension ($d=1$).
They also showed that it is possible to get better results
for \textsc{Unit Clustering} than for \textsc{Unit Covering}.
Specifically, they developed the first algorithm with competitive ratio below $2$ for $d=1$, namely
a randomized algorithm with competitive ratio $15/8$.
This fact has been confirmed by subsequent algorithms designed for this problem;
the current best ratio $5/3$, for $d=1$, is due to Ehmsen and Larsen~\cite{EL13},
and this gives a ratio of $2^d \cdot \frac56$ for every $d \geq 2$ (the $L_\infty$ norm is used);
their algorithm is deterministic. The appropriate ``lifting'' technique has been layed out
in~\cite{CZ09,ZC09}.  From the other direction, the lower bound for deterministic algorithms
has evolved from $3/2$ in~\cite{CZ09} to $8/5$ in~\cite{ES10}, and then to $13/8$ in~\cite{KK15}.

Answering a question of Epstein and van Stee~\cite{ES10},
Dumitrescu and T\'oth~\cite{DT17} showed that the competitive ratio of any algorithm
(deterministic or randomized) for \textsc{Unit Clustering} in $\RR^d$
under the $L_{\infty}$ norm must depend on the dimension~$d$;
in particular, it is $\Omega(d)$ for every $d\geq 2$.

Liao and Hu~\cite{LH10} gave a PTAS for a related disk cover problem
(another variant of \textsc{Set Cover}): given a set of $m$ disks of arbitrary radii
and a set $P$ of $n$ points in $\RR^2$, find a minimum-size subset of disks that jointly
cover~$P$; see also~\cite[Corollary~1.1]{MR10}.

\paragraph{Our results.}
(i) We show that the competitive ratio of \texttt{Algorithm Centered}
for online \textsc{Unit Covering} in $\RR^d$, $d\in \NN$, under the $L_{2}$ norm
is bounded by the Newton number of the Euclidean ball in the same dimension.
In particular, it follows that this ratio is $O(1.321^d)$
(Theorem~\ref{thm:centered} in Section~\ref{sec:centered}).
This greatly improves on the ratio of the previous best algorithm due to Charikar~\etal~\cite{CCFM04}.
The competitive ratio of their algorithm is at most $f(d)=O(2^d d \log{d})$, where $f(d)$ is the
number of unit balls needed to cover a ball of radius $2$ (i.e., the doubling constant).
By a volume argument, $f(d)$ is at least $2^d$. In particular $f(2)=7$ and $f(3)=21$~\cite{EdWynn};
see also~\cite{BLMS17}. The competitive ratios of our algorithm are $5$ in the plane and
$12$ in $3$-space, improving the earlier ratios of $7$ and $21$, respectively.

\smallskip
(ii) We show that the competitive ratio of every deterministic online algorithm
(with an adaptive deterministic adversary) for \textsc{Unit Covering}
in $\RR^d$ under the $L_{2}$ norm is at least $d+1$ for every $d \geq 1$
(Theorem~\ref{thm:lower-d} in Section~\ref{sec:lower}).
This greatly improves the previous best lower bound, $\Omega(\log{d} / \log{\log{\log{d}}})$,
due to Charikar~\etal~\cite{CCFM04}.

\smallskip
(iii) We obtain lower bounds of $4$ and $5$ for the competitive ratio of any deterministic algorithm
(with an adaptive deterministic adversary) for \textsc{Unit Covering} in $\RR^2$ and respectively
$\RR^3$ (Theorems~\ref{thm:lower-2} and~\ref{thm:lower-d} in Section~\ref{sec:lower}).
The previous best lower bounds were both $3$.

\smallskip
(iv) For input point sequences that are subsets of the infinite square or hexagonal lattices,
we give deterministic online algorithms for \textsc{Unit Covering} with an optimal
competitive ratio of $3$ (Theorems~\ref{thm:lattice1} and~\ref{thm:lattice2}
in Section~\ref{sec:lattice}).

\paragraph{Notation and terminology.}
For two points $p,q \in \RR^d$, let $d(p,q)$ denote the Euclidean
distance between them. Throughout this paper the $L_2$-norm is used.
The closed ball of radius $r$ in $\RR^d$ centered at point $z=(z_1,\ldots,z_d)$ is
$$ B_d(z,r) = \{ x \in \RR^d \ | \ d(z,x) \leq r \}=
\{(x_1,\ldots,x_d) \ | \ \sum_{i=1}^d (x_i -z_i)^2 \leq r^2 \}. $$
A \emph{unit ball} is a ball of unit radius in $\RR^d$.
The \textsc{Unit Covering} problem is to cover a set of points in $\RR^d$ by a minimum
number of unit balls.

The \emph{unit sphere} is the surface of the $d$-dimensional unit ball centered at
the origin $\mathbf{0}$, namely, the set of points $S_d \subset B_d(\mathbf{0},1)$
for which equality holds:  $\sum_{i=1}^d x_i^2=1$.
A \emph{spherical cap} $C(\alpha)$ of angular radius $\alpha \leq \pi$ and center $P$ on $S_d$
is the set of points $Q$ of $S_d$ for which $\angle{P\mathbf{0}Q} \leq \alpha$; see~\cite{Ra55}.

\section{Analysis of \texttt{Algorithm Centered} for online unit covering in  Euclidean $d$-space}
\label{sec:centered}

For a convex body $C \subset \RR^d$, the \emph{Newton number}
(a.k.a. \emph{kissing number}) of $C$ is the
maximum number of nonoverlapping congruent copies of $C$ that can be arranged
around $C$ so that each of them is touching $C$~\cite[Sec.~2.4]{BMP05}.
Some values $N(B_d)$, where $B_d =B_d(\mathbf{0},1)$, are known exactly for small $d$,
while for most dimensions $d$ we only have estimates.
For instance, it is easy to see that $N(B_2)=6$, and it is known that $N(B_3)=12$
and $N(B_4)=24$.
The problem of estimating $N(B_d)$ in higher dimensions is closely related to the problem of
determining the densest sphere packing and the knowledge in this area is largely incomplete
with large gaps between lower and upper bounds; see~\cite[Sec.~2.4]{BMP05}
and the references therein; in particular, many upper and lower estimates up
to $d=128$ are given in~\cite{BDM15} and~\cite{ERS98}.
In this section, we prove the following theorem.

\begin{theorem} \label{thm:centered}
Let $\varrho(d)$ be the competitive ratio of \texttt{Algorithm Centered} in $\RR^d$
(when using the $L_2$ norm). Then $\varrho(2)  = N(B_2)-1=5$, $\varrho(3)  = N(B_3)=12$,
and $\varrho(d) \leq N(B_d)$ for every $d \geq 4$. In particular, $\varrho(d) = O(1.321^d)$.
\end{theorem}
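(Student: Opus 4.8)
The plan is to analyze \texttt{Algorithm Centered} by examining a single ball $B$ opened by \opt, and bounding how many balls \alg\ can open with centers inside $B$. Recall that when \alg\ opens a ball, its center is a point $p$ that was not covered by any previously opened ball; in particular, any two centers opened by \alg\ are at Euclidean distance greater than $1$ from each other (if $p$ were within distance $1$ of an earlier center $c$, then $p \in B_d(c,1)$ and \alg\ would not open a new ball). Now fix an \opt-ball of unit radius, say $B_d(z,1)$, and let $c_1,\ldots,c_k$ be the centers of \alg-balls that lie in $B_d(z,1)$; these are the \alg-balls that \opt\ ``pays for'' via $B_d(z,1)$, and summing over all \opt-balls gives $\alg \le (\max_B k)\cdot \opt$. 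So it suffices to bound $k$.

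The key geometric observation is that the points $c_1,\ldots,c_k$ all lie in the ball $B_d(z,1)$ and are pairwise at distance $>1$. I would first dispose of the easy case: at most one $c_i$ can coincide with $z$ itself; and if we translate so that $z=\mathbf 0$, then each $c_i$ with $c_i\neq \mathbf 0$ has $0<|c_i|\le 1$ and $|c_i-c_j|>1=$ the common radius. Rescaling by the (possibly different) lengths $|c_i|$ to unit vectors $u_i=c_i/|c_i|$ on $S_d$, I claim the corresponding unit balls $B_d(u_i,1)$ are pairwise nonoverlapping and all touch the central unit ball $B_d(\mathbf 0,1)$: touching is immediate since $|u_i|=1$, and nonoverlap follows because the map $x\mapsto x/|x|$ restricted to the spherical shell $1/2\le |x|\le 1$... actually the cleanest route is to show directly that $|c_i-c_j|>1$ together with $|c_i|,|c_j|\le 1$ forces $\angle(c_i,\mathbf 0,c_j) > 60^\circ$, hence the angular separation of the $u_i$ on $S_d$ exceeds $60^\circ$, which is exactly the condition that the $B_d(u_i,1)$ are interior-disjoint and kiss $B_d(\mathbf 0,1)$. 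Counting these plus the central ball gives $k\le N(B_d)$ in general, and $k\le N(B_d)+1$ only in the degenerate case where one center is at $z$ — which I would handle by a slightly more careful argument (e.g., if some $c_i=z$, the remaining centers are at distance $>1$ from $z$, contradiction with $|c_i|\le1$, so in fact at most one center equals $z$ and then there are no others, giving $k=1$). This yields $\varrho(d)\le N(B_d)$.

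For the exact values in dimensions $2$ and $3$, I would prove matching lower bounds by explicit adversary constructions: place the first point at the center $z$ of what will be \opt's single ball, forcing \alg\ to open a ball centered there (or wherever \alg\ chooses), then feed points near the kissing configuration on the appropriate sphere so that \alg\ is forced to open a new ball for each, while all points stay within a unit ball so $\opt=1$. For $d=3$ the $12$ kissing points of the icosahedral/FCC configuration can be used and one checks $\opt=1$ throughout; for $d=2$ the hexagonal configuration gives $6$ surrounding points but one must argue the bound is $5$ rather than $6$ — here I expect the subtlety is that six points at mutual distance exactly $1$ on a unit circle cannot all simultaneously be centers together with the center point while keeping pairwise distance strictly greater than $1$, or more precisely that \alg, seeing them one at a time, is forced into a configuration realizing only $N(B_2)-1=5$; the ``$-1$'' in the planar case is precisely the statement that in $\RR^2$ the bound $k\le N(B_2)$ is not tight and can be improved to $k\le N(B_2)-1=5$ by a careful angular-packing argument on the circle (six unit balls kissing a central unit ball in the plane leave no room, but one of the six ``slots'' must be occupied by the central ball's own center in our counting, or an analogous parity/boundary effect). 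The matching lower bound of $5$ then comes from an adversary that realizes exactly this.

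The main obstacle I anticipate is getting the planar constant exactly right — separating $5$ from $6$ requires more than the crude ``pairwise distance $>1$ inside a unit ball'' packing bound, and similarly confirming that $12$ (not $13$) is achievable and tight in $\RR^3$ requires care about whether the central point $z$ counts as one of the $N(B_3)=12$ or is ``extra.'' The resolution in $d=3$ is that the $12$-point kissing configuration in $S_3$ can be placed so that its points, together with a thirteenth point that \alg\ has already committed elsewhere, do not all fit, so \opt\ can cover with one ball while \alg\ is forced to $12$; and the general-$d$ upper bound $\varrho(d)\le N(B_d)$ absorbs the off-by-one via the case analysis on whether a center sits at $z$. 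Finally, $N(B_d) = O(1.321^d)$ is the known Kabatiansky–Levenshtein upper bound on kissing numbers (cited via~\cite{BMP05}), which gives the claimed asymptotics with no further work.
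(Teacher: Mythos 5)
Your upper-bound argument is essentially the paper's: both reduce the count of \texttt{Algorithm Centered}'s centers inside one \opt-ball $B(o,1)$ to the observation that any two such centers $p,q$ satisfy $|pq|>1$ and $|op|,|oq|\le 1$, hence $\angle poq>\pi/3$, and then invoke the Newton number as the maximum number of directions with pairwise angular separation at least $\pi/3$ (the paper isolates this as Lemma~\ref{lem:60} and Corollary~\ref{cor:caps}). Your law-of-cosines route to the angle bound is fine, and your disposal of the degenerate case $c_i=o$ is correct. One small slip: the kissing configuration corresponds to balls $B_d(2u_i,1)$, not $B_d(u_i,1)$ (unit balls centered at the unit vectors themselves all contain the origin); this does not affect the count.

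There are, however, two genuine gaps. First, your tight-example constructions begin by ``placing the first point at the center $z$ of \opt's ball.'' For \texttt{Algorithm Centered} this is self-defeating: the algorithm opens $B(z,1)$, which \emph{is} the \opt-ball, so every subsequent point of the instance is already covered and the ratio is $1$. The correct instances contain no center point: the five vertices of a regular pentagon inscribed in a unit circle (pairwise distance $2\sin(\pi/5)>1$) for $d=2$, and the twelve vertices of a regular icosahedron inscribed in a unit sphere (pairwise distance $(\sin(2\pi/5))^{-1}>1$) for $d=3$; each point forces a new ball, while all fit in one unit ball. Second, your justification of the planar ``$-1$'' is not the right one -- it is not about a slot being occupied by the central ball's center or a parity effect. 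The reason is simply that Lemma~\ref{lem:60} gives \emph{strict} inequality $\angle p o q>\pi/3$, and on a circle six points with pairwise angular gaps strictly exceeding $\pi/3$ would force the gaps to sum to more than $2\pi$; hence at most $5$ centers fit, i.e., $\varrho(2)\le N(B_2)-1=5$. (In $d=3$ no such saving is available, and the icosahedron shows $12$ is attained.) With these two repairs your argument matches the paper's proof.
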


A key fact for proving the theorem is the following easy lemma.

\begin{lemma} \label{lem:60}
  Let $B=B(o)$ be a unit ball centered at $o$, that is part of $\opt$.
  Let $p,q \in B$ be any two points in $B$ presented to the online algorithm that forced the
  algorithm to open new balls centered at $p$ and $q$; refer to Fig.~\ref{fig:60}.
  Then $\angle{poq} > \pi/3$.
  \end{lemma}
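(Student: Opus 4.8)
The plan is to argue by contradiction: suppose $\angle poq \le \pi/3$, and show that then one of $p,q$ would have been covered by the ball centered at the other, so the algorithm would not have opened a second ball. First I would recall how \texttt{Algorithm Centered} behaves: when a new point arrives, it opens a ball centered at that point \emph{only if} the point is not already covered by a previously opened unit ball. So if $p$ arrived before $q$ (the other case is symmetric) and the algorithm opened a ball at $q$, then $q$ must not lie in $B(p)$, i.e.\ $d(p,q) > 1$. Thus it suffices to show that $p,q \in B(o)$ together with $\angle poq \le \pi/3$ force $d(p,q) \le 1$.

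The geometric heart of the argument is an elementary estimate in the triangle $p\,o\,q$. We have $|op| \le 1$ and $|oq| \le 1$ since $p,q \in B(o) = B_d(o,1)$. Writing $a = |op|$, $b = |oq|$, and $\gamma = \angle poq$, the law of cosines gives $d(p,q)^2 = a^2 + b^2 - 2ab\cos\gamma$. If $\gamma \le \pi/3$, then $\cos\gamma \ge 1/2$, so
\[
d(p,q)^2 \le a^2 + b^2 - ab.
\]
Now I would observe that $a^2 + b^2 - ab \le 1$ whenever $0 \le a,b \le 1$: indeed $a^2+b^2-ab - 1 = -(1-a^2) - b(a-b) \le \ldots$, or more cleanly, fix $a$ and note the quadratic $b \mapsto a^2 + b^2 - ab - 1$ in $b$ is increasing on the relevant range and attains value $a^2 - a \le 0$ at $b=1$, hence is $\le 0$ for all $b \in [0,1]$. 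Therefore $d(p,q)^2 \le 1$, i.e.\ $d(p,q) \le 1$, contradicting $d(p,q) > 1$. (If $\gamma = \pi/3$ exactly with $a=b=1$ we get $d(p,q)=1$, which still means $q \in B(p)$ since the ball is closed, so the strict inequality $\angle poq > \pi/3$ in the statement is genuinely needed and comes out of this boundary case.)

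The main obstacle is essentially bookkeeping rather than depth: one must be careful that the ordering of $p$ and $q$ matters only up to symmetry, and that the "forced to open" hypothesis is used correctly — namely, that whichever of $p,q$ came second was not covered by the ball centered at the first. One should also make sure the closed-versus-open-ball convention is handled so that the conclusion is the strict inequality $\angle poq > \pi/3$ and not merely $\ge$. Everything else is the routine planar computation above, which goes through verbatim in any dimension $d$ since it only involves the two-dimensional plane spanned by $o$, $p$, $q$.
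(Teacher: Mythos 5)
Your proof is correct and follows essentially the same route as the paper: both argue by contradiction in the triangle $poq$, using that the later-arriving of $p,q$ satisfies $|pq|>1$; the paper invokes the law of sines (the larger of the two base angles is at least $\pi/3 \ge \angle poq$, forcing $|oq| \ge |pq| > 1$, contradicting $q\in B$), while you invoke the law of cosines together with the inequality $a^2+b^2-ab\le 1$ on $[0,1]^2$. One small repair: the quadratic $b\mapsto b^2-ab+(a^2-1)$ is not increasing on all of $[0,1]$ (it decreases for $b<a/2$), but it is convex, hence maximized at an endpoint, and both endpoint values $a^2-1$ and $a^2-a$ are $\le 0$, so your inequality -- and the rest of the argument -- stands.
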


\begin{figure}[htbp]
\centering\includegraphics[scale=0.63]{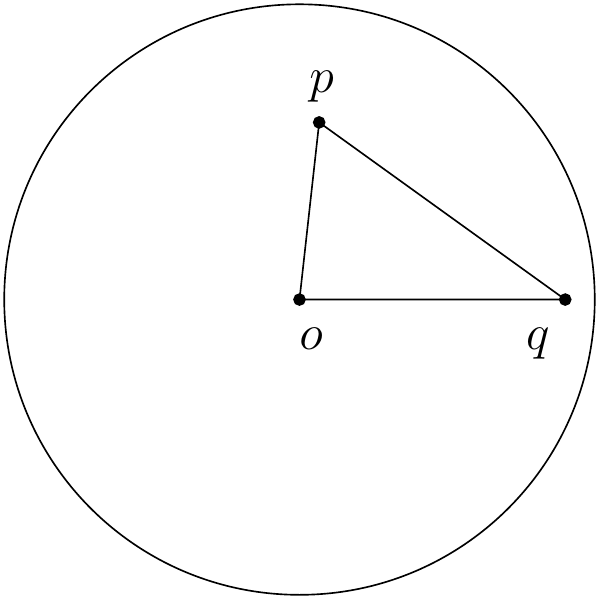}
\caption{Lemma~\ref{lem:60}.}
\label{fig:60}
\end{figure}

\begin{proof}
 Assume for contradiction that $\alpha= \angle{poq} \leq \pi/3$.
 Assume also, as we may, that $p$ arrives before $q$.
  Since $q \notin B(p)$, we have $|pq| > 1$.
  Consider the triangle $\Delta{poq}$; we may further assume that
  $\angle{opq} \geq \angle{oqp}$ (if we have the opposite inequality,
  the argument is symmetric). In particular, we have $\angle{opq} \geq \pi/3$.
Since $\angle{poq} \leq \pi/3$ and $\angle{opq} \geq \pi/3$, the law of sines yields that
$|oq| \geq |pq| > 1$. However, this contradicts the fact that $q$ is contained in $B$,
and the proof is complete.
\end{proof}

\begin{corollary} \label{cor:caps}
  Let $B=B(o)$ be a unit ball centered at $o$, that is part of $\opt$.
  Let $p \in B$ be any point in $B$ presented to the online algorithm that forced the
  algorithm to open a new ball centered at $p$; consider the cone $\Psi(p)$ with apex at $o$,
  axis $\vec{op}$, and angle $\pi/6$ around $\vec{op}$. Then the cones $\Psi(p)$ are pairwise
  disjoint in $B$; hence the corresponding caps on the surface of $B$ are also
  nonoverlapping.
\end{corollary}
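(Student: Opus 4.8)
The plan is to deduce the corollary directly from Lemma~\ref{lem:60}; the only extra ingredient is the triangle inequality for angles at the common apex $o$. First I would record a trivial reduction: every forcing point $p\in B$ may be assumed distinct from $o$. Indeed, if $p=o$ then $B(p)=B(o)=B$, so no subsequently presented point $q\in B$ could force the algorithm to open a new ball (it already lies in $B(p)$); hence $o$ can be a forcing point only when it is the unique one in $B$, a degenerate situation in which the claim is vacuous. From now on all the cones $\Psi(p)$ are well defined.

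Next I would show that any two of these cones meet only at $o$. Suppose, for contradiction, that for two distinct forcing points $p,q\in B$ there is a point $x\neq o$ lying in both $\Psi(p)$ and $\Psi(q)$. By the definition of the cones, the ray $\vec{ox}$ makes an angle at most $\pi/6$ with $\vec{op}$ and also with $\vec{oq}$. Since the angle between two rays emanating from $o$ is precisely the geodesic distance between the corresponding directions on the unit sphere of directions, it satisfies the triangle inequality, so $\angle poq \le \angle pox + \angle xoq \le \pi/6 + \pi/6 = \pi/3$. This contradicts Lemma~\ref{lem:60}. Hence the cones $\Psi(p)$ share only the apex $o$, i.e., they are pairwise disjoint in $B$ (they have pairwise disjoint interiors).

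Finally, for the caps: the intersection of $\Psi(p)$ with the bounding sphere $\partial B$ is exactly the spherical cap of angular radius $\pi/6$ centered at the point where the ray $\vec{op}$ crosses $\partial B$. Since distinct cones share only the point $o\notin\partial B$, these caps are pairwise disjoint, hence nonoverlapping. (Equivalently, one could argue directly that the caps have angular radius $\pi/6$ while, by Lemma~\ref{lem:60}, their centers are pairwise at angular distance greater than $\pi/3=2\cdot\pi/6$.)

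I do not expect a real obstacle here: all the geometric substance is already contained in Lemma~\ref{lem:60}. The only points that merit a word of care are the degenerate case $p=o$, dispatched by the reduction above, and the explicit use of the triangle inequality for angles measured at a common apex, which is elementary.
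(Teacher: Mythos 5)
Your proof is correct and follows exactly the route the paper intends: the corollary is stated without proof as an immediate consequence of Lemma~\ref{lem:60}, and your derivation via the triangle inequality for angles at the common apex $o$ (together with the harmless dispatch of the degenerate case $p=o$) is precisely the missing elementary argument.
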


\paragraph{Proof of Theorem~\ref{thm:centered}.}
For each unit ball $B$ of $\opt$ we bound from above the number of unit balls opened by
\texttt{Algorithm Centered} whose center lies in $B$. Suppose this number is at most $A$
(for each ball in $\opt$). Since the center of each unit ball opened by the algorithm is a point of the set
and all points in the set are covered by balls in $\opt$, it follows that the competitive ratio of
\texttt{Algorithm Centered} is at most $A$.

By Corollary~\ref{cor:caps} we are interested in the maximum number $A(\alpha)$ of
nonoverlapping caps $C(\alpha)$ that can be placed on $S_d$, for $\alpha=\pi/6$.
This is precisely the number of unit balls that can touch a given ball externally without
overlapping, namely the Newton number $N(B_d)$ in the same dimension~$d$.

For $d=2$ we gain $1$ in the bound due to the fact that the inequality in Lemma~\ref{lem:60}
is strict and we are dealing with the unit circle; the five vertices of a regular pentagon 
inscribed in a unit circle make a tight example with ratio $5$; note that the minimum pairwise distance
between points is $2 \sin (\pi/5) > 1$, and so the algorithm places a new ball for each point.
For $d=3$ the twelve vertices of a regular icosahedron inscribed in a unit sphere make a tight example
with ratio $12$; note that the minimum pairwise distance between points is $(\sin (2\pi/5))^{-1} > 1$,
and the same observation applies. 
\hfill$\qed$

\paragraph{Bounds on the Newton number of the ball.}
  A classic formula established by Rankin~\cite{Ra55} yields that
\begin{equation} \label{eq:rankin}
  N(B_d) \leq \frac{\pi^{1/2} \, \Gamma \left(  \frac{d-1}{2} \right)}
  {2 \sqrt2  \, \Gamma \left(  \frac{d}{2} \right) \,
    \int_0^{\pi/4} (\sin \theta)^{d-2} \, (\cos \theta -\cos \frac{\pi}{4}) \intd \theta} =A_d^*,
\end{equation}
and
\begin{equation} \label{eq:rankin2}
A_d^* \sim \sqrt{\frac{\pi}{8}}  \, d^{3/2} \, 2^{d/2}.
\end{equation}

More recently, Kabatiansky and Levenshtein~\cite{KL89} have established
a sharper upper bound
\begin{equation} \label{eq:kl}
  N(B_d) \leq 2^{0.401 d (1+o(1))}.
\end{equation}
In particular, $N(B_d) = O(1.321^d)$.
It is worth noting that the best lower known on the Newton number,
due to Jenssen~\etal~\cite{JJP18} is far apart; see also~\cite{Wy65}.
\begin{equation} \label{eq:jjp}
  N(B_d) = \Omega \left( d^{3/2} \cdot \left( \frac{2}{\sqrt3} \right)^d \right).
\end{equation}
In particular, $N(B_d) = \Omega(1.154^d)$.

\paragraph{Remark.} From an earlier discussion in Section~\ref{sec:intro},
it immediately follows that \texttt{Algorithm Centered} has a competitive ratio $O(1.321^d)$
for \textsc{Unit Clustering} in $\RR^d$ under the $L_2$ norm.
It is however worth noting that presently there is no online algorithm
for \textsc{Unit Clustering} in $\RR^d$ under the $L_\infty$ norm
with a competitive ratio $o(2^d)$. The best one known under this norm (for large $d$)
has ratio $2^d  \cdot \frac56$ for every $d \geq 2$; note that this is only marginally better than
the trivial $2^d$ ratio.

\section{Lower bounds on the competitive ratio for online unit covering in Euclidean $d$-space}
\label{sec:lower}

Theorem~\ref{thm:lower-d} that we prove in this section greatly improves
the previous best lower bound on the competitive ratio of a deterministic algorithm,
$\Omega(\log{d} / \log{\log{\log{d}}})$, due to Charikar~\etal~\cite{CCFM04}.

\paragraph{Previous lower bounds for $d=2,3$.}
To clarify matters, we briefly summarize the calculation leading to the previous best lower bounds
on the competitive ratio. Charikar~\etal~\cite{CCFM04} claim (on p.~1435) that a lower bound
of $4$ for $d=2$ under the $L_2$ norm follows from their Theorem~6.2;  but this claim appears unjustified;
only a lower bound of $3$ is implied. The proof uses a volume argument. For a given $d$,
the parameters $R_t$ are iteratively computed for $t=1,2,\ldots$ by using the recurrence relation
\begin{equation} \label{eq:volume}
  R_{t+1} = \frac{R_t + t^{1/d}}{2}, \text{      where } R_1=0.
\end{equation}
The lower bound on the competitive ratio of any deterministic algorithm given by the argument
is the largest $t$ for which $R_t \leq 1$. The values obtained for $R_t$, for $t=1,2,\ldots$
and $d=2,3$ are listed in Table~\ref{tab:volume}; as such, both lower bounds are equal to $3$. 
\begin {table} [htbp]
\centering
\begin{tabular}{|c|c|c|c|c|}
\hline
$d$ & $R_1$ & $R_2$ & $R_3$ & $R_4$ \\
\hline
\hline
$2$ & $0$ & $0.5$ & $0.957\ldots$ & $1.344\ldots$  \\ \hline
$3$ & $0$ & $0.5$ & $0.879\ldots$ & $1.161\ldots$  \\ \hline
\end{tabular}
\caption {Values $R_t$, for $t=1,2,\ldots$}
\label {tab:volume}
\end {table}

\subsection{A new lower bound in the plane}

In this section, we deduce an improved lower bound of $4$
(an alternative proof will be provided by Theorem~\ref{thm:lower-d}). 

\begin{theorem}\label{thm:lower-2}
  The competitive ratio of any deterministic online algorithm for \textsc{Unit Covering}
  in the plane (in the $L_2$ norm) is at least $4$.
\end{theorem}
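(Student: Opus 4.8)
The plan is to design an adaptive adversary that, against any deterministic online algorithm, forces it to open $4$ balls while keeping all presented points inside a single unit ball (so $\opt = 1$). The natural approach is to mimic the pigeonhole idea behind Lemma~\ref{lem:60}, but pushed one step further: since two points that both force new balls must subtend an angle $> \pi/3$ at the center of \emph{any} common covering ball, the key is to show that the adversary can always drive the algorithm into a configuration where four ``forcing'' points coexist inside one unit ball, which is only possible because on the unit circle one can fit four points pairwise more than $\pi/3$ apart (indeed up to five, but four suffices here and is easier to control). Concretely, I would start the adversary by presenting the center $o$ of the target ball, forcing the first algorithmic ball $B_1$; then present a point $p_2$ at distance $\delta$ slightly larger than the radius of $B_1$ could reach — more carefully, a point inside the target unit ball but outside $B_1$ — forcing $B_2$, and so on. The subtlety is that the algorithm chooses where to \emph{place} each ball (within distance $1$ of the forcing point), so the adversary must react to those placements.

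The main steps, in order, would be: (1) Fix a small parameter $\eps>0$ and a ``budget'' unit ball $U$ (radius $1$) which will contain all presented points; the adversary will only ever present points in $U$, guaranteeing $\opt=1$. (2) Present a first point $x_1$; the algorithm opens a ball $D_1$ (center within distance $1$ of $x_1$). (3) Present a second point $x_2 \in U$ chosen outside $D_1$ — this is possible provided $D_1$ does not cover $U$, which one arranges by choosing $x_1$ near the boundary of $U$ so that the algorithm's $D_1$, whatever its legal placement, misses a sizeable region of $U$. The algorithm opens $D_2$. (4) Inductively, having forced $D_1,\dots,D_k$, argue that $D_1\cup\cdots\cup D_k$ cannot cover all of $U$ as long as $k\le 3$, because — and this is the crux — if it did, then by Lemma~\ref{lem:60} applied to $U$, the forcing points $x_1,\dots,x_k$ would be pairwise at angular distance $>\pi/3$ as seen from the center of $U$, so $k$ unit disks each ``anchored'' near $x_i$ cannot jointly cover the antipodal-rich part of $U$ when $k\le 3$; hence a point $x_{k+1}\in U\setminus(D_1\cup\cdots\cup D_k)$ exists and is presented, forcing $D_{k+1}$. (5) After $D_4$ is forced we stop: the algorithm has used $4$ balls, $\opt=1$, done.

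The delicate part — and I expect it to be the main obstacle — is step (4): proving that three unit disks, each required only to be within distance $1$ of its respective forcing point $x_i\in U$, genuinely cannot cover the unit disk $U$ in a way the adversary cannot circumvent. A clean way to handle this is a measure/angular argument on the boundary circle $\partial U$: a unit disk centered within distance $1$ of a point $x\in U$ intersects $\partial U$ in an arc of angular length at most some $\beta<\pi$ (maximized when the disk is tangent-ish), and one shows $3\beta < 2\pi$, so three such disks leave a boundary point uncovered — but one must be careful that the uncovered point is not merely on the boundary but that a nearby interior point of $U$ is uncovered and can legally be presented. I would instead phrase it via the caps-on-the-sphere picture of Corollary~\ref{cor:caps}: the forcing points, viewed from the center of $U$, have pairwise angle $>\pi/3$, giving disjoint caps of angular radius $\pi/6$; since three such caps have total area strictly less than that of $S_1=\partial U$, there is angular room for a fourth forcing point, and a short continuity/openness argument turns ``angular room'' into an actual point of $U$ outside the union of the current disks. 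The pentagon tight example at the end of the proof of Theorem~\ref{thm:centered} shows $4$ is not the end of the road, but for a clean self-contained bound of $4$ this three-cap area estimate is the load-bearing computation, and getting the boundary-versus-interior technicality right is where the care goes.
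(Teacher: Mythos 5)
Your plan has a genuine gap at its load-bearing step (4). The claim that three unit disks $D_1,D_2,D_3$, each merely required to contain its forcing point $x_i\in U$, cannot cover the pre-fixed unit disk $U$ is false in general: three disks of radius $\sqrt{3}/2<1$ already suffice to cover a unit disk, so three unit disks have ample slack. The angular estimate you propose does not rescue this. A unit disk whose center is at distance $t\in(0,2)$ from the center of $U$ covers an arc of $\partial U$ of angular length $2\arccos(t/2)$, which tends to $\pi$ as $t\to 0$; three such arcs can total well over $2\pi$, so ``$3\beta<2\pi$'' fails. More fundamentally, the disjoint caps of Lemma~\ref{lem:60}/Corollary~\ref{cor:caps} constrain where the \emph{forcing points} may lie, not where the algorithm's \emph{disks} lie; the existence of angular room for a fourth cap says nothing about the existence of a point of $U$ outside $D_1\cup D_2\cup D_3$. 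You are in effect repurposing the upper-bound machinery of Theorem~\ref{thm:centered} (which bounds how many balls the algorithm can be charged per optimal ball) to do a lower bound's job (exhibiting an uncovered point), and it does not transfer.

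The paper's two proofs each sidestep exactly this obstacle, in different ways. The direct proof of Theorem~\ref{thm:lower-2} never commits to a single covering disk $U$ in advance: after forcing $D_1,D_2$ with explicit coordinates, it keeps \emph{two} candidate fourth points $q$ and $r$ with $|qr|\approx\sqrt{5}>2$, so that no single unit disk --- in particular not $D_3$ --- can cover both; only after seeing $D_3$ does the adversary pick $p_4\in\{q,r\}$ and only then is the unit disk certifying $\opt=1$ determined (it differs in the two cases). The alternative proof inside Theorem~\ref{thm:lower-d} does fix the geometry up front, but places the candidate fourth points on a circle $\overline{S}$ of radius about $1.29$, strictly larger than $2\sqrt{3}/3\approx 1.1547$, and invokes Jung's theorem to show three unit disks cannot cover a circle that large; your circle $\partial U$ has radius $1<2\sqrt{3}/3$, which is precisely why the same conclusion is unavailable to you. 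To repair your argument you would need either the branching device or a larger carrier circle; as written, the adversary may simply get stuck after three disks.
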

\begin{proof} Consider a deterministic online algorithm $\alg$.
We present an input instance $\sigma$ for $\alg$
and show that the solution $\alg(\sigma)$ is at least $4$ times $\opt(\sigma)$.
Our proof works like a two player game, played by Alice and Bob. Here,
Alice is presenting  points to Bob, one at a time. Bob (who plays the role of the algorithm)
takes the decision whether to place a new disk or not. If a new disk is
required, Bob decides where to place it. Alice tries to force Bob to
place as many new disks as possible by presenting the points in a smart way.
Bob tries to place new disks in a way such that they
may cover other points presented by Alice in the future, thereby
reducing the need of placing new disks quite often.
\begin{figure}[htpb]
\centering\includegraphics[scale=0.75]{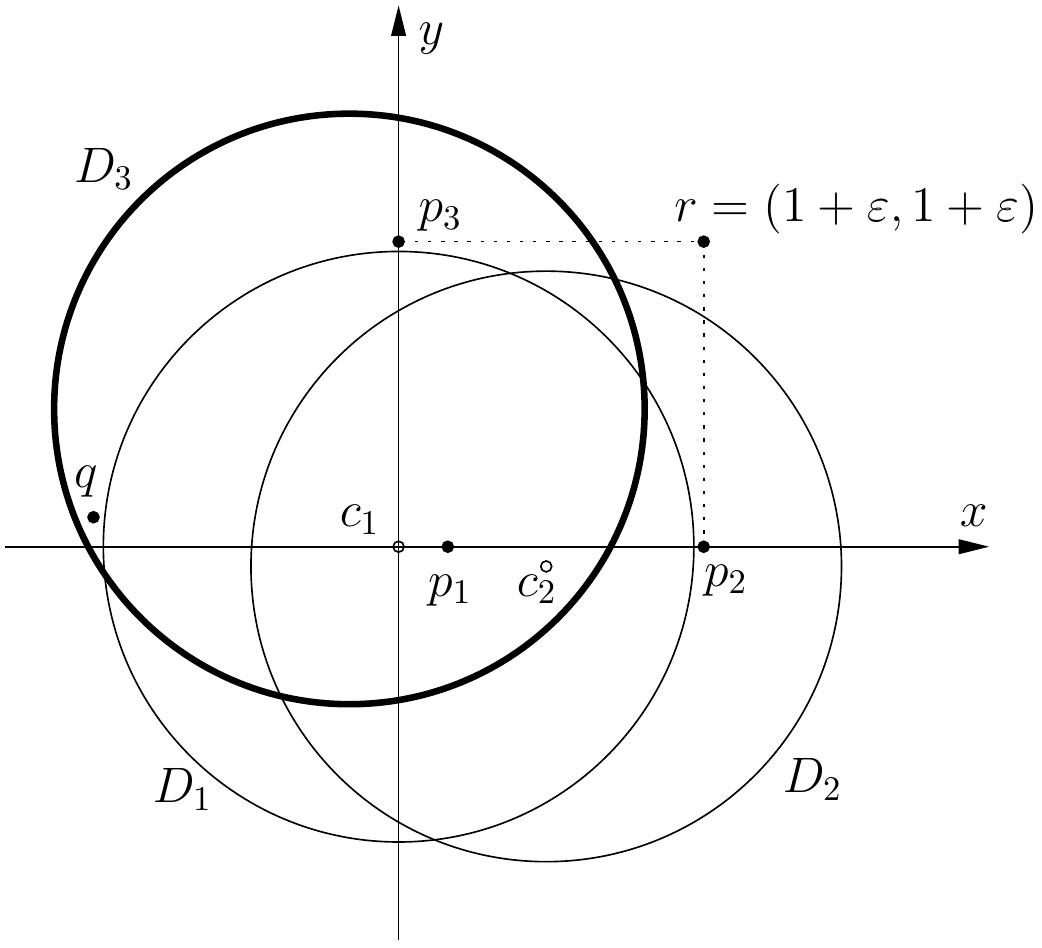}
\caption{A lower bound of $4$ on the competitive ratio in the plane.
The figure illustrates the case $p_4=r$.}
\label{fig:f21}
\end{figure}		

The center of a disk $D_i$ is denoted by $c_i$, $i=1,2,\ldots$.
The point coordinates will depend on a parameter $\eps>0$; a sufficiently small
$\eps \leq 0.01$ is chosen so that the inequalities appearing in the proof hold.
First, point $p_1$ arrives and the algorithm places disk $D_1$ to cover it. By symmetry,
it can be assumed that $c_1=(0,0)$ and  $p_1=(x,0)$, where $0 \leq x \leq 1$.
The second point presented is $p_2=(1+\eps^2,0)$ and,
since $p_2 \notin D_1$, a second disk $D_2$ is placed to cover it.
By symmetry, it can be assumed that $y(c_2) \leq 0$.
The third point presented is $p_3=(0,1+\eps)$, and neither $D_1$ nor $D_2$ covers it;
thus a new disk, $D_3$, is placed to cover $p_3$.

Consider two other candidate points, $q=(-1+\eps, \sqrt{2 \eps})$ and $r=(1+\eps,1+\eps)$.
Since
$$ |q c_1|^2 = (-1 +\eps)^2 + 2 \eps = 1+ \eps^2 - 2 \eps + 2 \eps = 1 + \eps^2 >1, $$
$q$ is not covered by $D_1$;  and clearly $r$ is not covered by $D_1$.
Since
\begin{align*}
  |q c_2|^2 &\geq (1 -\eps + \eps^2)^2  + 2 \eps
= 1+ \eps^2 + \eps^4 - 2 \eps + 2 \eps^2 -2 \eps^3 + 2 \eps\\
&= 1 + 3 \eps^2 +O(\eps^3) > 1,
\end{align*}
$q$ is not covered by $D_2$; and clearly $r$ is not covered by $D_2$.
Note also that the $D_3$ cannot cover both $q$ and $r$, since their distance is close to $\sqrt5>2$.
We now specify $p_4$, the fourth point presented to the algorithm.
If $q$ is covered by $D_3$, let $p_4=r$, otherwise let $p_4=q$.
In either case, a fourth disk, $D_4$, is required to cover $p_4$.

To conclude the proof, we verify that $p_1,p_2,p_3,p_4$ can be covered by a unit disk.

\medskip
\emph{Case 1: $p_4=r$.} It is easily seen that $p_1,p_2,p_3, p_4$ can be covered by the
unit disk $D$ centered at $\left(\frac12,\frac12\right)$; indeed, the four points are close to
the boundary of the unit square $[0,1]^2$.

\medskip
\emph{Case 2: $p_4=q$.} Consider the unit disk $D$ centered at the midpoint $c$ of $q p_2$. We have
\begin{align*}
  |q p_2|^2 &= (2 -\eps + \eps^2)^2 + 2 \eps
  = 4 + \eps^2 + \eps^4 - 4 \eps + 4 \eps^2 -2 \eps^3 + 2 \eps \\
  &= 4 - 2 \eps + O(\eps^2) < 4.
\end{align*}
It follows that $D$ covers $p_2$ and $p_4$.
Note that
$$ c =\left( \frac{\eps+\eps^2}{2}, \sqrt{\frac{\eps}{2}} \right). $$
We next check the containment of $p_1$ and $p_3$.
$$ |c p_1|^2 \leq \left(1 -\frac{\eps + \eps^2}{2} \right)^2 + \frac{\eps}{2}
= 1 - \frac{\eps}{2} +O(\eps^2) <1, $$
thus $D$ also covers $p_1$.
Finally, we have
\begin{align*}
|c p_3|^2 &\leq \left(\frac{\eps + \eps^2}{2} \right)^2 + \left(1 + \eps - \sqrt{\frac{\eps}{2}} \right)^2
 \leq \eps^2 + \left(1 + \eps - \sqrt{\frac{\eps}{2}} \right)^2\\
 &= \eps^2 + 1 + \eps^2 + \frac{\eps}{2} + 2 \eps - \sqrt{ 2 \eps} -  \sqrt{ 2 \eps^3} \\
 &= 1 - \sqrt{ 2 \eps} +O(\eps) <1,
\end{align*}
thus $D$ also covers $p_3$.

\medskip
We have shown that $\alg(\sigma)/\opt(\sigma) \geq 4$, and the proof is complete.
\end{proof}

\subsection{A new lower bound in $d$-space}

We introduce some additional terminology.
For every integer $k$, $0\leq k<d$, a \emph{$k$-sphere} of radius $r$
centered at a point $c\in\RR^d$
is the locus of points in $\RR^d$ at distance $r$ from a center $c$,
and lying in a $(k+1)$-dimensional affine subspace that contains $c$.
In particular, a $(d-1)$-sphere of radius $r$ centered at $c$ is
the set of \emph{all} points $p\in \mathbb{R}^d$ such that $|cp|=r$;
a 1-sphere is a circle lying in a $2$-dimensional affine plane;
and a 0-sphere is a pair of points whose midpoint is $c$.
A~\emph{$k$-hemisphere} is a $k$-dimensional manifold with boundary,
defined as the intersection $S\cap H$, where $S$ is a $k$-sphere
centered at some point $c\in \mathbb{R}^d$ and $H$ is a halfspace
whose boundary $\partial H$ contains $c$ but does not contain~$S$.
For $k\geq 1$, the \emph{relative boundary} of the $k$-hemisphere $S \cap H$
is the $(k-1)$-sphere $S \cap (\partial H)$ concentric with $S$;
and the \emph{pole} of $S\cap H$ is the unique point $p\in H$ such that  $\overrightarrow{cp}$
is orthogonal to the $k$-dimensional affine subspace that contains $S\cap (\partial H)$.
For $k=0$, a 0-hemisphere consists of a single point, and we define
the \emph{pole} to be that point.
We make use of the following observation.

\begin{observation} \label{obs:balls}
Let $S$ be a $k$-sphere of radius $1+\varepsilon$, where $0\leq k<d$
and $\varepsilon>0$; and let $B$ be a unit ball in $\mathbb{R}^d$.
Then $S\setminus B$ contains a $k$-hemisphere.
\end{observation}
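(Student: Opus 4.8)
The plan is to reduce the claim to a one-variable computation inside the affine subspace spanned by $S$. Write $B=B_d(z,1)$, let $c$ be the center of $S$, let $V$ be the $(k+1)$-dimensional affine subspace through $c$ that contains $S$, and let $V_0$ be the linear subspace parallel to $V$. First I would orthogonally project $z$ onto $V$, obtaining a point $z'\in V$. Then for every $p\in S$ the vector $z-z'$ is orthogonal to $V_0$, while $z'-p\in V_0$, so the Pythagorean theorem gives $|zp|^2=|zz'|^2+|z'p|^2\ge |z'p|^2$. Hence it suffices to produce a $k$-hemisphere $S\cap H$ on which $|z'p|>1$: each such $p$ will then satisfy $|zp|\ge|z'p|>1$, i.e.\ $p\notin B$, and therefore $S\cap H\subseteq S\setminus B$.

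Next I would choose the halfspace $H$. If $z'=c$, then $|z'p|=1+\varepsilon>1$ for every $p\in S$, so in fact $S\cap B=\emptyset$ and any $k$-hemisphere of $S$ works. If $z'\ne c$, let $u=(c-z')/|c-z'|\in V_0$ and set $H=\{x\in\RR^d:\langle x-c,u\rangle\ge 0\}$. One checks that $c\in\partial H$, and that $\partial H$ does not contain $S$: indeed $S$ affinely spans the $(k+1)$-dimensional subspace $V$, while $\partial H\cap V$ is only $k$-dimensional since $u\in V_0\setminus\{0\}$. Thus $S\cap H$ is a genuine $k$-hemisphere (when $k=0$ it is a single point, consistent with the definition).

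Finally I would carry out the computation. Parametrize a point of $S$ as $p=c+(1+\varepsilon)w$ with $w\in V_0$ and $|w|=1$; membership $p\in H$ is equivalent to $\langle w,u\rangle\ge 0$. Writing $a=|c-z'|>0$, we have $z'-p=-au-(1+\varepsilon)w$, hence $|z'p|^2=a^2+2(1+\varepsilon)\,a\,\langle u,w\rangle+(1+\varepsilon)^2\ge(1+\varepsilon)^2>1$, using $\langle u,w\rangle\ge 0$ on $H$. Combined with $|zp|^2\ge|z'p|^2$, this yields $S\cap H\subseteq S\setminus B$, as required.

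I do not anticipate a serious obstacle here; the points that need a little care are: setting up the right reduction (orthogonal projection onto $V$, which makes the Pythagorean identity available and removes the ambient coordinates), disposing of the degenerate case $z'=c$, and verifying that the chosen $\partial H$ genuinely fails to contain $S$ — which boils down to the dimension count ``affine hull of $S$ has dimension $k+1$, whereas $\partial H\cap V$ has dimension $k$''. The low-dimensional case $k=0$ (where a hemisphere is a single point) is handled by exactly the same argument without modification.
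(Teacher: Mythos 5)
Your proof is correct and follows essentially the same route as the paper: both arguments select the $k$-hemisphere of $S$ facing away from (the projection of) the center of $B$ and verify via a one-line distance computation that every point there is at distance greater than $1$ from that center. The only difference is presentational — the paper normalizes coordinates so that the center of $B$ sits at $(b,0,\ldots,0)$ and takes the hemisphere $x_1\le 0$, whereas you make the orthogonal projection onto the affine hull of $S$ explicit and argue coordinate-free, which if anything handles the case where the center of $B$ lies outside that affine hull a bit more carefully.
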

\begin{proof}
Without loss of generality, $S$ is centered at the origin, and lies in
the subspace spanned by the coordinate axes $x_1,\ldots, x_{k+1}$. By
symmetry, we may also assume that the center of $B$ is on the
nonnegative $x_1$-axis, say, at $(b,0,\ldots ,0)$ for some $b\geq
0$. If $b=0$, then $S$ and $B$ are concentric and $B$ lies in the
interior of $S$, consequently, $S\setminus B=S$. Otherwise, $S\cap B$
lies in the open halfspace $x_1>0$, and $S\setminus B$ contains
the $k$-hemisphere $S\cap \{(x_1,\ldots , x_d)\in \mathbb{R}^d: x_1\leq 0\}$.
\end{proof}

\begin{theorem} \label{thm:lower-d}
The competitive ratio of every deterministic online algorithm
(with an adaptive deterministic adversary) for \textsc{Unit Covering}
in $\RR^d$ under the $L_{2}$ norm is at least $d+1$ for every $d \geq 1$;
and at least $d+2$ for $d =2,3$.
\end{theorem}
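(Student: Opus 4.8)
The plan is to build an adaptive adversary that forces the algorithm to open $d+1$ balls (or $d+2$ for $d=2,3$) while keeping all presented points coverable by a single unit ball. The overall strategy is inductive on the number of points already presented, maintaining the invariant that after $t$ balls have been opened, there is a $(d-t)$-hemisphere (or more precisely, a $(d-t)$-sphere minus finitely many unit balls, which by Observation~\ref{obs:balls} still contains a hemisphere) of radius $1+\varepsilon$ centered at the eventual center of the optimal ball, such that every point on it lies outside all $t$ currently-open balls. I would fix a small parameter $\varepsilon>0$ at the start and choose the radius $1+\varepsilon$ so that the final configuration of poles fits inside a unit ball.

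First I would set up the induction carefully. Present $p_1$ arbitrarily, say at the origin $o$; the algorithm opens $D_1$. The sphere $S_0$ of radius $1+\varepsilon$ about $o$ is a $(d-1)$-sphere; by the argument in Observation~\ref{obs:balls}, $S_0\setminus D_1$ contains a $(d-1)$-hemisphere $H_1$. Let $p_2$ be the pole of $H_1$: it lies at distance $1+\varepsilon$ from $o$ and outside $D_1$, so the algorithm opens $D_2$. Now restrict attention to the relative boundary of $H_1$, which is a $(d-2)$-sphere $S_1$ of radius $1+\varepsilon$ (concentric with $S_0$, lying in the hyperplane through $o$ orthogonal to $\overrightarrow{op_2}$). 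Crucially, every point of $S_1$ is still at distance $1+\varepsilon$ from $o$, hence outside $D_1$; and $S_1\subset H_1$ so every point of $S_1$ is outside $D_2$. Apply Observation~\ref{obs:balls} to $S_1$ and $D_2$ (and also $D_1$, but those points are already excluded): $S_1\setminus D_2$ contains a $(d-2)$-hemisphere $H_2$; its pole $p_3$ forces $D_3$; and we pass to the $(d-3)$-sphere $S_2=\partial H_2$, and so on. After $d$ steps we reach $S_{d-1}$, a $0$-sphere, i.e.\ a pair of antipodal points at distance $1+\varepsilon$ from $o$; one of them, $p_{d+1}$, lies outside $D_d$ (and all earlier balls) and forces $D_{d+1}$. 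So $\alg(\sigma)\geq d+1$.

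Next I would verify $\opt(\sigma)=1$, i.e.\ that $p_1,\ldots,p_{d+1}$ all lie in a single unit ball. The key geometric fact is that $p_1=o$ is the common center, and $p_2,\ldots,p_{d+1}$ all lie on $S_0$, the sphere of radius $1+\varepsilon$ about $o$; moreover, by construction, $\overrightarrow{op_{i+1}}$ is orthogonal to the affine span of $S_i$ which contains all of $p_{i+2},\ldots,p_{d+1}$, so the vectors $\overrightarrow{op_2},\ldots,\overrightarrow{op_{d+1}}$ are pairwise orthogonal, each of length $1+\varepsilon$. Thus $\{p_2,\ldots,p_{d+1}\}$ together with $o$ forms (up to isometry) the standard orthogonal frame scaled by $1+\varepsilon$; its smallest enclosing ball has center $\frac{1}{d+1}(p_2+\cdots+p_{d+1})$ — wait, more carefully, the minimum enclosing ball of $\{o, (1+\varepsilon)e_1,\ldots,(1+\varepsilon)e_d\}$ has radius $\tfrac{1+\varepsilon}{2}\sqrt{\tfrac{d}{d+1}}\cdot(\text{const})$, which I would compute and check is $\leq 1$ for small enough $\varepsilon$ (for $d=1$ it is exactly $(1+\varepsilon)/2<1$; the radius grows with $d$ but stays bounded below $1$ as long as $\varepsilon$ is chosen appropriately — this needs a genuine check, since the circumradius of a regular simplex on these points approaches a limit and one must confirm that limit is strictly less than $1$). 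Actually the cleanest route: the points $o,(1+\varepsilon)e_1,\ldots,(1+\varepsilon)e_d$ all lie in the ball of radius $\tfrac{1}{2}\sqrt{d}\,(1+\varepsilon)$ centered at $\tfrac{1+\varepsilon}{2}(e_1+\cdots+e_d)$? No — $o$ is at distance $\tfrac{1+\varepsilon}{2}\sqrt d$ from that center, which exceeds $1$ for $d\geq 5$. So I must instead use the true minimum enclosing ball and verify the bound dimension-by-dimension is false for large $d$ — meaning the construction as stated needs the radius chosen more cleverly, e.g.\ not placing the points exactly at distance $1+\varepsilon$ but perturbing so they lie near a common unit ball; I expect this circumradius computation is exactly the main obstacle and the reason the theorem is only claimed for the hemisphere construction with the $1+\varepsilon$ radius — so I would re-examine and instead use points at distance $1+\varepsilon$ from $o$ but lying in a \emph{unit ball} by a more careful placement, checking that an orthonormal-like spread of $d$ points plus the center fits in a unit ball; concretely one checks that the minimum enclosing ball of these $d+1$ points has squared radius $\tfrac{(1+\varepsilon)^2}{2}\cdot\tfrac{d}{d+1}<1$ when $\varepsilon\to 0$ gives $\tfrac{d}{2(d+1)}<1$, which holds for all $d$ — so in fact the minimum enclosing ball radius is $\tfrac{1+\varepsilon}{\sqrt 2}\sqrt{\tfrac{d}{d+1}}$, not the naive bound, and this is $<1$ for small $\varepsilon$. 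I would present this circumradius formula and its verification as the crux of the $\opt=1$ direction.

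Finally, for the sharper bound $d+2$ when $d=2,3$, I would note that the generic construction above ``wastes'' the freedom at the last step: the $0$-sphere gives two antipodal candidates and the algorithm picks one ball for one of them, but the \emph{other} antipodal point, plus a suitable perturbation, can be used to squeeze out one more ball, exactly as in the proof of Theorem~\ref{thm:lower-2} where the candidate pair $q,r$ at mutual distance close to $\sqrt 5>2$ cannot both be covered by $D_3$, forcing $D_4$, while a regular-square-like configuration still fits in a unit disk. For $d=3$ I would analogously exhibit explicit points (e.g.\ based on a regular simplex or a box configuration inscribed near the unit sphere) realizing five balls, and verify by direct computation — leaning on Theorem~\ref{thm:lower-2} for $d=2$ and a short ad hoc argument for $d=3$ — that all five presented points still lie in one unit ball. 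The main obstacle throughout is the circumradius bookkeeping: ensuring that the forced points, which must be mutually far enough to defeat the algorithm's balls, are nonetheless jointly coverable by a single unit ball, and this is delicate precisely in the regime where the number of points equals or exceeds the dimension.
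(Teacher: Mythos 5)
Your construction of the $d+1$ points (iterated hemispheres of radius $1+\varepsilon$, poles forcing new balls, pairwise orthogonal vectors $\overrightarrow{op_i}$) is exactly the paper's argument, and that part is sound apart from one garbled justification: a point of $S_1$ is outside $D_1$ not ``because it is at distance $1+\varepsilon$ from $o$'' (a unit ball covering $o$ can certainly reach distance $1+\varepsilon$ from $o$) but because $S_1\subset H_1$ and $H_1$ was chosen disjoint from $D_1$. The real problem is your verification that $\opt(\sigma)=1$. Your final formula for the minimum enclosing ball of $\{o,(1+\varepsilon)e_1,\ldots,(1+\varepsilon)e_d\}$, namely radius $\frac{1+\varepsilon}{\sqrt2}\sqrt{\frac{d}{d+1}}$, is wrong: the $d$ outer points form a regular $(d-1)$-simplex of side $(1+\varepsilon)\sqrt2$, whose circumscribed ball (which also contains $o$ for $d\geq 2$) has radius $(1+\varepsilon)\sqrt{\frac{d-1}{d}}$, and this \emph{is} the minimum enclosing ball. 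This quantity tends to $1+\varepsilon$ as $d\to\infty$, so there is no uniform slack: the conclusion ``radius $<1$'' holds only if $\varepsilon$ is chosen as a function of $d$, e.g.\ $\varepsilon<\frac{1}{2d}$ as the paper does at the outset, rather than ``a small parameter fixed at the start'' independent of $d$. Your formula (bounded by $\frac{1+\varepsilon}{\sqrt2}$) would falsely suggest the opposite, and your earlier worry that ``the radius grows with $d$'' was the correct instinct; the fix is the constraint on $\varepsilon$, not a smaller circumradius.

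The second gap is the $d+2$ bound for $d=3$, for which you offer only a plan (``exhibit explicit points \ldots\ and verify by direct computation''). The antipodal-point idea does not suffice: the unused antipode of the last $0$-sphere lies at distance $1+\varepsilon$ from $o$ in a direction already ``spent,'' and adding it need not defeat $B_d$ nor keep the point set in a unit ball. The paper's actual mechanism is different: let $B$ be the minimum enclosing ball of $p_0,\ldots,p_d$, of radius $R=(1+\varepsilon)\sqrt{(d-1)/d}$ and center $c$, and let $\overline S$ be the sphere of radius $2-R$ about $c$; then $B$ together with \emph{any} point of $\overline S$ fits in a unit ball, so it suffices to show $d+1$ unit balls cannot cover $\overline S$. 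For $d=2$ this follows from Jung's theorem ($3$ unit disks covering a circle force its radius to be at most $\frac{2}{\sqrt3}\approx 1.1548$, whereas $\overline S$ has radius about $1.29$); for $d=3$ one picks a great circle of $\overline S$ avoiding one of the four balls and reduces to the planar statement, using that $2-(1+\varepsilon)\sqrt{2/3}\geq 1.1835>1.1548$. Without this (or an equivalent explicit construction with verified coverability), the $d=3$ part of the theorem is not proved. Deferring the $d=2$ case to Theorem~\ref{thm:lower-2} is legitimate, since that theorem has an independent proof.
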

\begin{proof}
Consider a deterministic online algorithm $\alg$.
We present an input instance $\sigma$ for $\alg$ and show that the solution
$\alg(\sigma)$ is at least $d+1$ times $\opt(\sigma)$.
In particular, $\sigma$ consists of $d+1$ points in $\mathbb{R}^d$ that fit in a unit ball,
hence $\opt(\sigma)=1$, and we show that $\alg$ is required to place
a new unit ball for each point in $\sigma$.
Similarly to the proof of Theorem~\ref{thm:lower-2},
our proof works like a two player game between Alice and Bob.

Let the first point $p_0=o$ be the origin in $\mathbb{R}^d$ (we will use either notation as
convenient). 
For a constant $\varepsilon \in (0,\frac{1}{2d})$, let $S_0$ be the $(d-1)$-sphere
of radius $1+\varepsilon$ centered at the origin $o$. Refer to Fig.~\ref{fig:lower-d}.
Next, $B_0$ is placed to cover $p_0$. 
The remaining points $p_1,\ldots , p_d$ in $\sigma$ are chosen adaptively,
depending on Bob's moves. We maintain the following two invariants:
For $i=1,\ldots , d$, when Alice has placed points $p_0,\ldots, p_{i-1}$,
and Bob placed unit balls $B_0,\ldots B_{i-1}$,
\begin{enumerate}[(I)] \itemsep 2pt
\item\label{inv:1} the vectors $\overrightarrow{op_j}$, for $j=1,\ldots , i-1$,
    are pairwise orthogonal and have length $1+\varepsilon$;
\item\label{inv:2} there exists a $(d-i)$-hemisphere $H_i\subset S_0$
    that lies in the $(d-i+1)$-dimensional subspace orthogonal
    to $\langle \overrightarrow{op_j}: j=1,\ldots , i-1\rangle$
    and is disjoint from $\bigcup_{j=0}^{i-1}B_j$.
\end{enumerate}
Both invariants hold for $i=1$: \eqref{inv:1} is vacuously true, and \eqref{inv:2}
holds by Observation~\ref{obs:balls} (the first condition of \eqref{inv:2} is vacuous in this case). 
\begin{figure}[htbp]
\centering\includegraphics{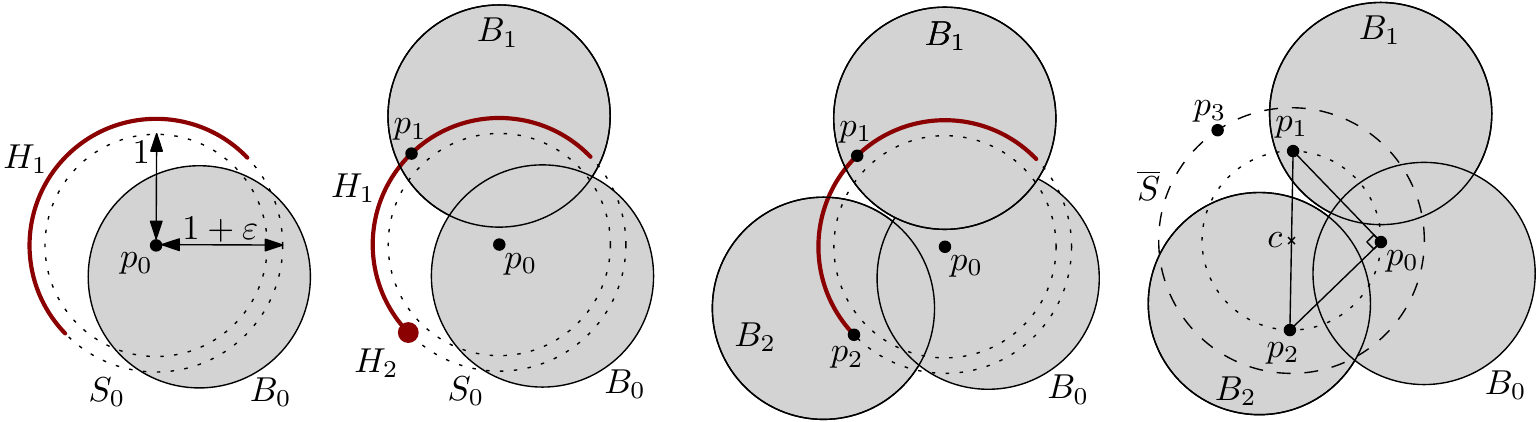}
\caption{The first three steps of the game between Alice and Bob in
  the proof of Theorem~\ref{thm:lower-d} for $d=2$.
After the 3rd step, Alice can place a 4th point $p_3\in \overline{S}$
which is not covered by the balls $B_0,B_1,B_2$.}
\label{fig:lower-d}
\end{figure}

At the beginning of step $i$ (for $i=1,\ldots, d$), assume that both invariants hold.
Alice chooses $p_i$ to be the pole of the $(d-i)$-hemisphere $H_i$.
By Invariant~\eqref{inv:2}, $p_i$ is not covered by $B_0,\ldots , B_{i-1}$,
and Bob has to choose a new unit ball $B_i$ that contains $p_i$.
By Invariant~\eqref{inv:1}, $H_i\subset S_0$, so $|op_i|=1+\varepsilon$.
By Invariant~\eqref{inv:2}, $\overrightarrow{op_i}$ is orthogonal
to the vectors $\overrightarrow{op_j}$, for $j=1,\ldots , i-1$.
Hence Invariant~\eqref{inv:1} is maintained.

Let $S_i$ be the relative boundary of $H_i$, which is a
$(d-i-1)$-sphere centered at the origin.
Since $p_i$ is the pole of $H_i$, $\overrightarrow{op_i}$
is orthogonal to the $(d-i)$-dimensional subspace spanned by $S_i$.
By Observation~\ref{obs:balls}, $S_i$ contains
a $(d-i-1)$-hemisphere that is disjoint from $B_i$.
Denote such a $(d-i-1)$-hemisphere by $H_{i+1}\subset S_i$.
Clearly, $H_{i+1}$ is disjoint from the balls
$B_0,\ldots, B_{i-1}, B_i$; so Invariant~\eqref{inv:2}
is also maintained.

By construction, $p_i$ ($i=1,\ldots, d$) is not covered by the balls
$B_0,\ldots, B_{i-1}$, so Bob has to place a unit ball for each of the
$d+1$ points $p_0,p_1,\ldots,p_d$.
By Invariant~\eqref{inv:1}, the points $p_1,\ldots , p_d$ span a regular
$(d-1)$-dimensional simplex of side length $(1+\varepsilon)\sqrt{2}$.
By Jung's Theorem~\cite[p.~46]{HB64}, the radius of the smallest enclosing
ball of $p_1,\ldots , p_d$ is
$$ R = (1+\varepsilon)\sqrt{2}\cdot\sqrt{\frac{d-1}{2d}}
<\left(1+\frac{1}{2d}\right)\sqrt{\frac{d-1}{d}}
=\sqrt{\frac{(2d+1)^2(d-1)}{4d^3}}
=\sqrt{\frac{4d^3-3d-1}{4d^3}}<1, $$
and this ball contains the origin $p_0$, as well.

We next show how to adjust the argument to derive a slightly better lower bound
of $d+2$ for $d=2,3$. 
Let $B$ be the smallest enclosing ball of the points $p_0,p_1,\ldots,p_d$,
and let $c$ be the center of $B$. As noted above, the radius of $B$ is
$R=(1+\varepsilon)\sqrt{(d-1)/d}$. Let $\overline {S}$ be the $(d-1)$-sphere of radius
$2-R = 2-(1+\varepsilon)\sqrt{(d-1)/d}$ centered at $c$. Then
the smallest enclosing ball of $B$ and an arbitrary point $p_{d+1}\in \overline{S}$
has unit radius. That is, points $p_0,\ldots , p_d,p_{d+1}$ fit in a unit ball.
This raises the question whether Alice can choose yet another point $p_{d+1}\in \overline{S}$
outside of the balls $B_0,\ldots, B_d$ placed by Bob.

For $d=2$, $\overline{S}$ has radius
$2-(1+\varepsilon)\sqrt{1/2}=2-(1+\varepsilon)(\sqrt{2}/2) \geq 1.2928$
(provided that $\eps>0$ is sufficiently small). 
A unit disk can cover a circular arc in $\overline{S}$ of diameter at most $2$.
If $3$ unit disks can cover $\overline{S}$, then $\overline{S}$ would be the smallest enclosing
circle of a triangle of diameter at most $2$, and its radius would be
at most $\frac{2}{3}\sqrt{3} \leq 1.1548$ by Jung's Theorem.
Consequently, Alice can place a 4th point $p_3 \in \overline{S}$ outside of $B_0,B_1,B_2$,
and all four points $p_0,\ldots,p_3$ fit in a unit disk; see Fig.~\ref{fig:lower-d}\,(right)
for an example. That is, $\alg(\sigma)=4$ and $\opt(\sigma)=1$; and we
thereby obtain an alternative proof of Theorem~\ref{thm:lower-2}.

For $d=3$, $\overline{S}$ has radius $R_1= 2-(1+\varepsilon) (\sqrt{2/3}) \geq 1.1835$
(provided that $\eps>0$ is sufficiently small). 
Let $c_i$ denote the center of $B_i$, for $i=0,1,2,3$;
we may assume that at least one of the balls $B_i$, say $B_0$, is not concentric with $\overline{S}$,
since otherwise $\bigcup_{i=0}^{3}B_i$ would cover zero area of $\overline{S}$.
We may also assume for concreteness that $c c_0$ is a vertical segment;
let $\pi_0$ denote the horizontal plane incident to $c$. 
Then $C = \overline{S} \cap \pi_0$ is horizontal great circle (of radius $R_1$) centered at $c$.
Note that $C \cap B_0 =\emptyset$, and so if $\bigcup_{i=0}^{3} B_i$ covers $\overline{S}$,
then $\bigcup_{i=1}^{3} (B_i \cap \pi_0)$ covers $C$. However, the analysis of the planar case ($d=2$)
shows that this is impossible; indeed, we have $R_1 \geq 1.1835 > 1.1548$.
Consequently,  Alice can place a 5th point $p_4 \in \overline{S}$ outside of $B_0,B_1,B_2,B_3$,
and all five points $p_0,\ldots,p_4$ fit in a unit ball.
That is, $\alg(\sigma)=5$ and $\opt(\sigma)=1$ and a lower bound of $5$ 
on the competitive ratio is implied.
\end{proof}


\section{Unit covering for lattice points in the plane} \label{sec:lattice}

In this section, we give optimal deterministic algorithms for online \textsc{Unit Covering}
of points from the infinite unit square and hexagonal lattices.
The input points always belong to the lattice under consideration.
We start with the infinite unit square lattice $\ZZ^2$.

\begin{theorem} \label{thm:lattice1}
  There exists a deterministic online algorithm for online \textsc{Unit Covering} of integer points
  (points in $\ZZ^2$) with competitive ratio $3$. This result is tight:  the competitive ratio of
  any deterministic online algorithm for this problem is at least $3$.
\end{theorem}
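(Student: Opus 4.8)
The plan is to design an online algorithm for points in $\ZZ^2$ that groups lattice points into a small number of families, each of which is covered efficiently, and to pair this with a short adversary argument for the lower bound. First I would observe that a unit ball (unit radius, so diameter $2$) centered at a lattice point $z=(a,b)$ covers exactly the five lattice points $z$ and $z\pm e_1$, $z\pm e_2$ — the ``plus-shaped'' pentomino — since the only lattice points within distance $1$ of $z$ are these five. The key combinatorial fact I would establish is that $\ZZ^2$ can be partitioned into translates of this pentomino: the sublattice $L$ generated by $(1,2)$ and $(2,-1)$ has index $5$, and the five cosets of $L$ are exactly realized by choosing the ``centers'' to be the points of $L$ and tiling by the plus-pentomino (this is the classical perfect Lee-sphere tiling of $\ZZ^2$). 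So there is a fixed partition of $\ZZ^2$ into blocks $P_z$, $z\in L$, each $P_z$ being the five points coverable by one unit ball centered at $z$.

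Given this tiling, the algorithm is: for each incoming point $p\in\ZZ^2$, let $z\in L$ be the unique center with $p\in P_z$; if a ball centered at $z$ has already been opened, do nothing; otherwise open a unit ball at $z$. Correctness is immediate (every $p$ lands in its ball), and each opened ball is charged to a distinct block $P_z$. For the competitive ratio I would argue that any single unit ball $B$ of \opt{} — wherever its center sits, not necessarily at a lattice point — contains lattice points from at most three distinct blocks $P_z$. The reason: $B$ has diameter $2$, so the lattice points inside $B$ all lie within an axis-aligned $2\times 2$ region (actually within distance $2$ of each other), and I would check, by a finite case analysis over the relative position of the block structure, that such a small cluster of lattice points meets at most three of the index-$5$ cosets. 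Hence each ball of \opt{} causes the algorithm to open at most $3$ balls, giving competitive ratio at most $3$.

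For the matching lower bound, I would run the usual Alice/Bob game restricted to $\ZZ^2$: Alice presents a point $p_1$; Bob opens a unit ball $B_1$, which covers at most five lattice points, all inside a region of diameter $2$. Alice then presents lattice points that together with $p_1$ still fit in a common unit ball but are not all covered by $B_1$, forcing a second ball, and then a third — concretely, she can exhibit three lattice points that pairwise lie in a common unit ball (e.g.\ a suitable collinear or near-collinear triple at spacing $1$) but for which no single unit ball covers all three once the first two balls are badly placed, since a deterministic $B_1$ cannot simultaneously be a ``plus'' around two different useful centers. I expect the main obstacle to be the bound ``at most three blocks per \opt{} ball'': one must be careful because \opt{}'s ball is not centered at a lattice point, so the five (or fewer) lattice points it covers can straddle the coset partition in several ways, and ruling out a fourth coset requires examining the geometry of the index-$5$ tiling against an arbitrarily placed disk of diameter $2$, rather than just against lattice-centered disks. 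The lower bound side should be routine once a good triple of lattice points is found.
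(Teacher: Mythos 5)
Your overall plan (an online algorithm that pre-partitions $\ZZ^2$ into unit-disk-coverable blocks and opens the block's disk for each new point, analyzed by bounding how many blocks one optimal disk can meet, plus an adaptive adversary for the lower bound) is structurally the same as the paper's. However, the step you yourself flag as the main obstacle is where the proposal fails: the claim that an arbitrarily placed unit disk meets at most three blocks of the plus-pentomino (Lee-sphere) tiling is \emph{false}. Concretely, take $L=\{(a,b)\in\ZZ^2: a+2b\equiv 0 \pmod 5\}$, the index-$5$ sublattice generated by $(1,2)$ and $(2,-1)$, with blocks $P_z=\{z,\,z\pm(1,0),\,z\pm(0,1)\}$ for $z\in L$. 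The four lattice points $(1,0),(2,0),(1,1),(2,1)$ lie in a disk of radius $\sqrt{2}/2<1$ centered at $(3/2,1/2)$, yet their residues $a+2b \pmod 5$ are $1,2,3,4$, and they belong to four \emph{distinct} blocks, centered at $(0,0)$, $(2,-1)$, $(1,2)$, and $(3,1)$, respectively. So if the adversary presents exactly these four points, your algorithm opens four disks while $\opt=1$, and your algorithm's competitive ratio is at least $4$. The same happens for the mirror tiling generated by $(2,1)$ and $(1,-2)$ (use the unit cell at $(3,0)$), hence for every tiling of $\ZZ^2$ by plus-pentominoes: the trouble is exactly that an optimal disk need not be centered at a lattice point, and a unit cell of $\ZZ^2$ whose four residues are $\{1,2,3,4\}$ is split among four different pluses. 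Your reduction of the problem to a coset count is therefore not a gap to be filled but a statement that is simply untrue.

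The upper bound can be salvaged with a different partition; for instance, $2\times 2$ blocks of lattice points arranged in a staggered (brick-like) pattern, each covered by a disk of radius $\sqrt{2}/2$ about its center, do have the property that every plus and every unit cell of $\ZZ^2$ --- and hence every set of lattice points contained in a unit disk --- meets at most three blocks; the paper's own proof uses exactly such a figure-specified partition and the same ``at most three blocks per optimal disk'' counting. (Note that axis-aligned, unstaggered $2\times 2$ blocks also fail: a unit cell with both coordinates odd meets four of them.) On the lower-bound side your sketch has the right shape but is garbled as written: what Alice needs is not ``three lattice points that pairwise lie in a common unit ball but for which no single unit ball covers all three,'' but rather a set of candidate points that \emph{do} all fit in one unit disk together with $p_1$, arranged so that whichever disks Bob places, some candidate remains uncovered. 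The paper does this with five candidate points and two case distinctions (e.g., $|p_2p_3|=2\sqrt{2}>2$ forces $D_1$ to miss one of them, and a second pair at distance more than $2$ forces $D_2$ to miss one of those); as proposed, your lower bound is not yet a proof.
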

\begin{proof} First, we prove the lower bound; refer to Fig.~\ref{fig:f20-f15}\,(left).
  First, point $p_1$ arrives and disk $D_1$ covers it. Observe that $D_1$ misses at least one point
  from $\{p_2,p_3\}$, since $|p_2 p_3|=2 \sqrt2>2$. We may assume that $D_1$ missed $p_2$;
  and this further implies that $D_1$ does not cover $p_4$ or $p_5$, since if it would,
  it would also cover $p_2$, a contradiction. Now, $D_2$ is placed to cover $p_2$.
\begin{figure}[htbp]
\begin{center}
  \includegraphics[scale=0.85]{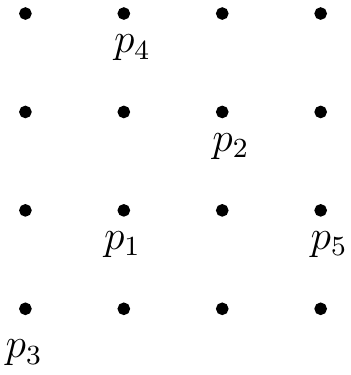}
  \hspace{12mm}
  \includegraphics[scale=0.41]{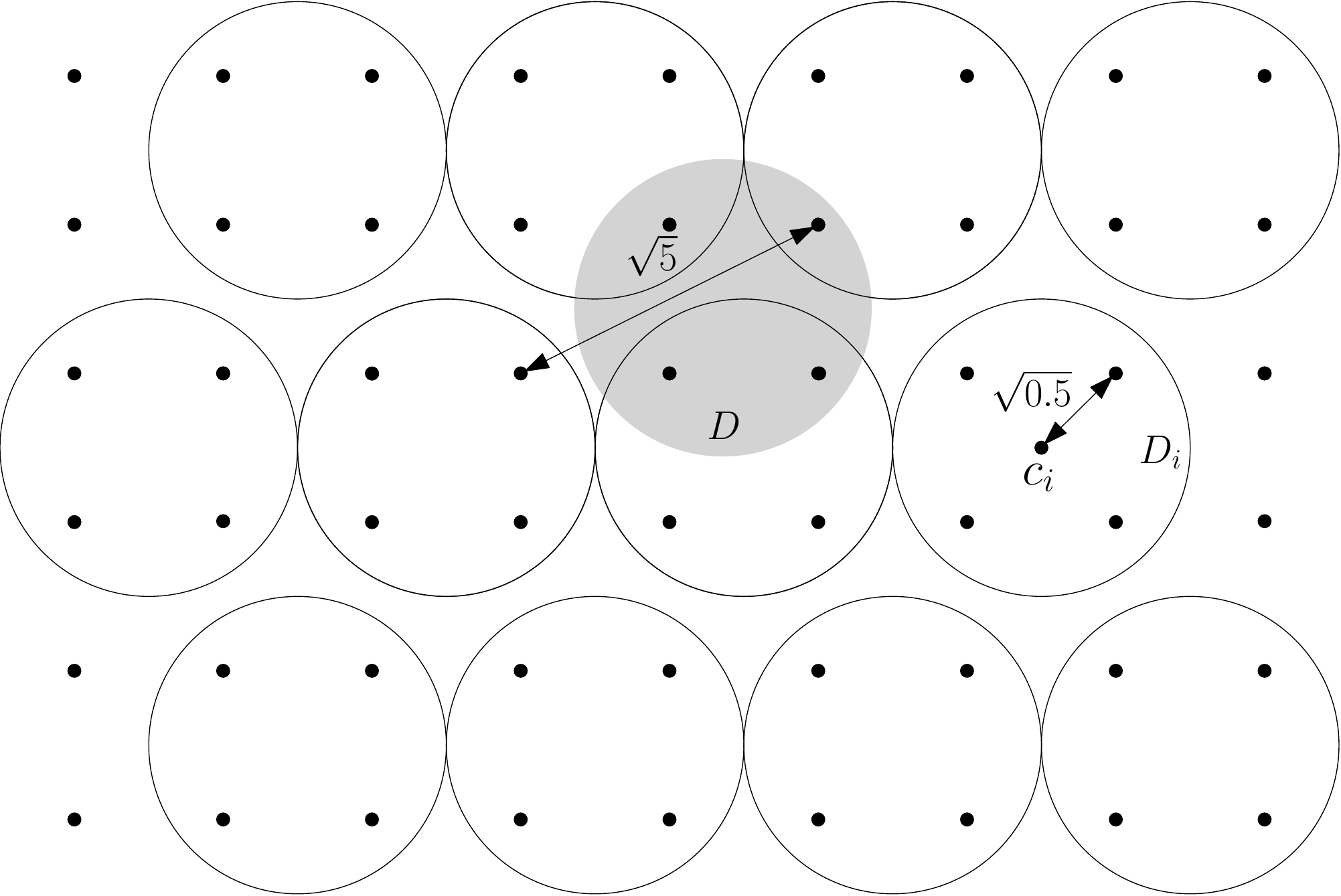}
\end{center}
\caption{Left: Lower bound for $\ZZ^2$.
Right: Illustration of the upper bound; the disk $D$ is shaded.}
\label{fig:f20-f15}
\end{figure}
If $D_2$ covers $p_4$, next input point is $p_5$, otherwise it is $p_4$.
In either case a third disk is needed.
To finish the proof, observe that $\{p_1,p_2,p_4\}$ or $\{p_1,p_2,p_5\}$ can be covered
optimally by a single unit disk; hence the competitive ratio of any deterministic algorithm
is at least~$3$.

Next, we present an algorithm which has competitive ratio~$3$. Refer to Fig.~\ref{fig:f20-f15}\,(right).
Partition the lattice points using unit disks as shown in the figure. When a point arrives,
use the disk it belongs to in the partition.
For the analysis, consider a disk $D$ from an optimal cover. As seen in the figure,
$D$ can cover points which belong to at most three disks used for partitioning the lattice.
Thus, we conclude that the algorithm has competitive ratio $3$.
\end{proof}

In the following, we state our result for the infinite hexagonal lattice.

\begin{theorem}\label{thm:lattice2}
  There exists a deterministic online algorithm for online \textsc{Unit Covering} of
  points of the hexagonal lattice with competitive ratio $3$. This result is tight:
  the competitive ratio of any deterministic online algorithm for this problem is at least $3$.
\end{theorem}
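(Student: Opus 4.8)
The plan is to mirror the two-part structure of Theorem~\ref{thm:lattice1}: an adaptive-adversary lower bound of $3$, and a matching upper bound obtained from a fixed periodic partition of the lattice. Both parts are transplanted from the square lattice to the hexagonal lattice; the transplant is more delicate than one might expect, and pinning down the right combinatorial data is the real work.

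\textbf{Upper bound $(\le 3)$.} Fix, once and for all, a periodic partition of the hexagonal lattice $\Lambda$ into finite clusters, each contained in some unit disk; the algorithm assigns every arriving point to the \emph{pre\-chosen} unit disk of the cluster containing it, opening that disk the first time its cluster is hit. With such a partition in hand, the competitive ratio is at most the maximum number of clusters a single unit disk can meet: if $D$ is a disk of an optimal cover, the input points inside $D$ lie in at most that many clusters, so $D$ charges at most that many disks of the algorithm, and summing over the disks of $\opt$ gives $\alg\le(\text{that number})\cdot\opt$. So the whole upper bound reduces to exhibiting a periodic partition into unit-disk-sized clusters for which \emph{every} unit disk meets at most three clusters; by periodicity this is a statement about finitely many relative positions of a unit disk and the partition, hence checkable by a finite case analysis (the content of the picture-based argument in the proof of Theorem~\ref{thm:lattice1}). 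The point where the hexagonal case is harder than $\ZZ^2$ is that the obvious choices fail: a unit disk can cover all seven points of a ``flower'', and under the naive sublattice partitions (flowers, or translates of a unit triangle, along a sublattice) these seven points spread over more than three clusters. One must therefore choose the cluster shape and the tiling judiciously so that, for \emph{any} set of at most seven lattice points a unit disk can contain, those points fall into at most three clusters, and then verify this over the finitely many cases.

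\textbf{Lower bound $(\ge 3)$.} Run the same Alice--Bob game as in the proof of Theorem~\ref{thm:lattice1}. Alice first presents $p_1$; Bob opens $D_1\ni p_1$. Alice keeps two candidate lattice points at mutual distance exceeding $2$, so $D_1$ must miss one of them; she presents that one, say $p_2$, forcing $D_2\ni p_2$. The configuration is set up so that those two candidates also lie outside $D_1$ whenever $D_1$ contains $p_1$ (exploiting that a unit disk through $p_1$ cannot also reach certain lattice points ``beyond'' $p_2$). Finally Alice keeps two further candidates, again at mutual distance exceeding $2$ and moreover chosen so that no single unit disk contains $p_2$ together with both of them; hence at least one of them is missed by $D_2$ (and, by the previous step, by $D_1$), and presenting it forces $D_3$. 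One then checks that the three points actually presented lie in a common unit disk, so $\opt=1$ while $\alg=3$; combined with the algorithm above, the bound $3$ is tight.

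\textbf{Main obstacle.} The genuine difficulty is entirely in choosing the combinatorial data. For the upper bound it is finding a periodic cluster-partition with the three-clusters property (the naive partitions fail, as noted) and carrying out the finite verification; for the lower bound it is exhibiting explicit points $p_1,p_2,\dots\in\Lambda$ satisfying \emph{all} of the simultaneous distance and disk-containment constraints at once, since the hexagonal geometry is tighter than that of $\ZZ^2$ and the integer points used in the proof of Theorem~\ref{thm:lattice1} have no completely obvious analogue.
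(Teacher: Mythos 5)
Your outline follows exactly the paper's two-part strategy (a fixed periodic partition of the lattice into unit-disk-sized clusters for the upper bound, and an adaptive Alice--Bob game for the lower bound), but as written it is a plan rather than a proof: the entire mathematical content of both halves is deferred. For the upper bound you correctly reduce everything to exhibiting a periodic partition such that every unit disk meets at most three clusters, you correctly observe that the naive choices (flowers, sublattice translates) fail --- and then you stop, without producing the partition or performing the finite verification. For the lower bound you likewise stop at ``exhibit explicit points satisfying all the constraints at once,'' which you yourself name as the main obstacle. Since the theorem is precisely the claim that such a partition and such an adversary configuration exist, nothing has been established.

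Moreover, your sketch of the adversary game is an oversimplification of what is actually needed. The paper's argument does not reduce to ``two final candidates at mutual distance exceeding $2$'': after $p_1$ and the forced second point $p_3$ (at distance $2\sqrt3$ from a sibling candidate), the adversary must branch on which of up to four further lattice points $p_4,p_5,p_6,p_7$ each of $D_1$ and $D_2$ happens to cover, using implications of the form ``if $D_1$ covers $p_4$ then it covers $p_3$'' to propagate exclusions, and in each branch a \emph{different} triple (e.g.\ $\{p_1,p_3,p_4\}$, $\{p_1,p_3,p_5\}$, $\{p_1,p_3,p_6\}$, or $\{p_1,p_3,p_7\}$) must be verified to fit in a single unit disk, with an explicit center such as $(p_1+p_4)/2$ or $(p_1+p_3+p_5)/3$ in each case. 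Whether a two-candidate final stage of the kind you describe can be realized on the hexagonal lattice is exactly the question you would need to settle; you have not shown it can. To complete the proof you must supply the explicit partition together with the finite case check for the upper bound, and the explicit point configuration together with the full branching analysis (including the coverability certificates) for the lower bound.
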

\begin{proof} The proof  is similar to that of Theorem~\ref{thm:lattice1}.
  We start by proving the lower bound of $3$; refer to Fig.~\ref{fig:f22-f18}\,(left).
\begin{figure}[htbp]
  \begin{center}
    \includegraphics[scale=0.60]{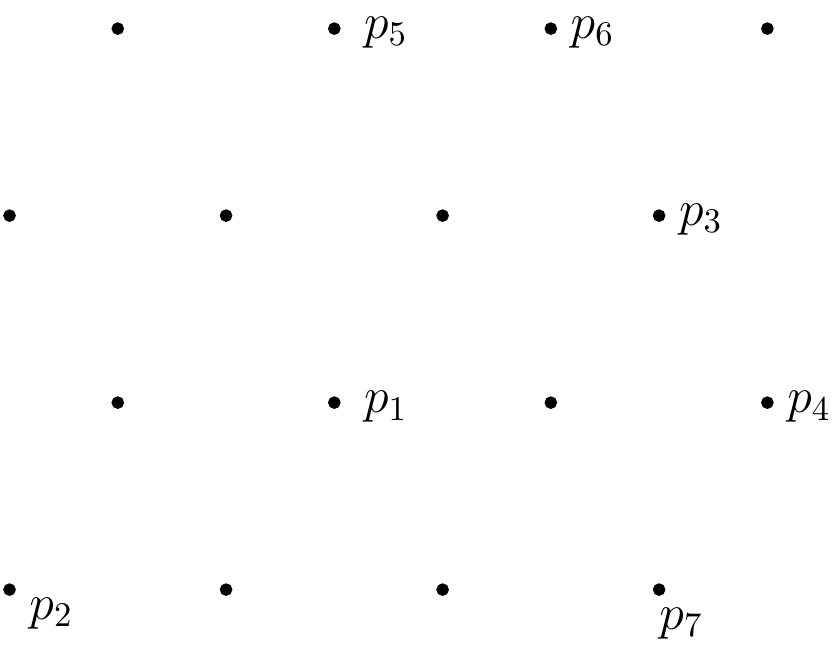}
    \hspace{12mm}
    \includegraphics[scale=0.7]{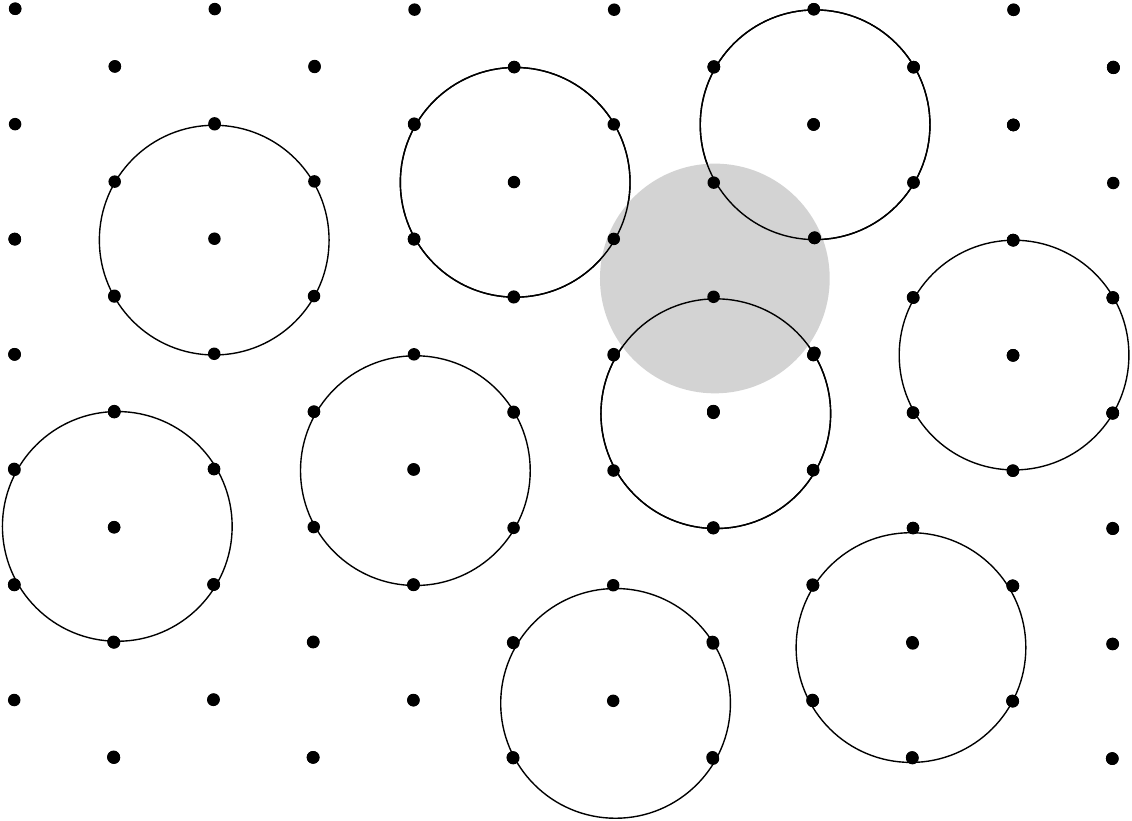}
\end{center}
\caption{Left: Lower bound for the hexagonal lattice.
Right: Illustration of the upper bound.}
\label{fig:f22-f18}
\end{figure}
The first point, $p_1$, arrives and $D_1$ is used to cover it.
$D_1$ misses at least one of $\{p_2,p_3\}$, since $|p_2 p_3|= 2\sqrt3 >2$.
By symmetry, we may assume that $D_1$ misses $p_3$.
The algorithm uses $D_2$ to cover $p_3$.
We distinguish two cases:

\medskip
\emph{Case 1: $D_2$ misses $p_4$.} Since $D_1$ misses $p_3$, $D_1$ also misses $p_4$.
Otherwise, if $D_1$ covers $p_4$, then $D_1$ also covers $p_3$, a contradiction.
The algorithm uses $D_3$ to cover $p_4$. Thus the ratio is $3$ since $p_1,p_3,p_4$ can be covered
by a single disk centered at $(p_1+p_4)/2$, and the algorithm has used three disks, $D_1,D_2,D_3$.

\medskip
\emph{Case 2: $D_2$ covers $p_4$.} This means that $D_2$ misses $p_5$.
Now, two things may happen:

\begin{enumerate} \itemsep 0pt
\item \emph{$D_1$ misses $p_5$ too.} Then $p_5$ is the next input point, and the
algorithm uses $D_3$ to cover it. Here $p_1,p_3,p_5$ can be covered by a single disk
centered at $(p_1+p_3+p_5)/3$, but the algorithm has used three disks, $D_1,D_2,D_3$.
\item \emph{$D_1$ covers $p_5$.} Since $D_1$ does not cover $p_3$,
  $D_1$ cannot cover $p_6$. If $D_2$ misses $p_6$, let $p_6$ be the third point presented;
  the algorithm uses $D_3$ to cover $p_6$. Here $p_1,p_3,p_6$ can be covered by a single disk
centered at $(p_1+p_6)/2$, but the algorithm has used three disks, $D_1,D_2,D_3$.
If $D_2$ covers $p_6$, let $p_7$ be the third point presented.
Note that $D_1$ cannot cover $p_7$ since it covers $p_5$; also,
$D_2$ cannot cover $p_7$ since it covers $p_6$.
The algorithm uses $D_3$ to cover $p_7$. Here $p_1,p_3,p_7$ can be covered by a single disk
centered at $(p_1+p_3+p_7)/3$, but the algorithm has used three disks, $D_1,D_2,D_3$.
\end{enumerate}
In all cases a lower bound of $3$ has been enforced by Alice, as required.

Now, we prove the upper bound of $3$. As in the case of the unit square lattice,
we partition the lattice points using disks as shown in Fig.~\ref{fig:f22-f18}\,(right).
Arguing similarly, it can be concluded that the same algorithm has competitive ratio $3$
in this case.
\end{proof}

\section{Conclusion} \label{sec:conclusion}

Our results suggest several directions for future study.
For instance, the gap between the sublinear lower bound and the exponential upper bound in
the competitive ratios for online \textsc{Unit Covering} is intriguing.
We summarize a few specific questions of interest.

\begin{problem}
  Is there a lower bound on the competitive ratio for \textsc{Unit Covering}
  that is exponential in $d$? Is there a superlinear lower bound?
\end{problem}

\begin{problem}
Can the online algorithm for integer points (with ratio $3$ in the plane)
be extended to higher dimensions, \ie, for covering points in $\ZZ^d$?
What ratio can be obtained for this variant?
\end{problem}

\end{document}